\newtheorem{lemma}{Lemma}
\begin{document}
\title{Generalized energy measurements and quantum work compatible with\\ fluctuation theorems}
%Characterization of generalized measurements defining quantum work compatible with fluctuation theorems
%Characterization of Generalized Measurements Defining Work Distribution Compatible with Quantum Fluctuation Theorems
%Compatibility of Generalized Measurement-Based Work Distribution with\\Quantum Fluctuation Theorems

\author{Kosuke Ito}\email{kosuke@zju.edu.cn}\affiliation{Department of Physics and Zhejiang Institute of Modern Physics, Zhejiang University, Hangzhou, Zhejiang 310027, China}
\author{Peter Talkner}\affiliation{Institut f\"ur Physik, Universit\"at Augsburg, Universit\"atsstra{\ss}e 1, 86135 Augsburg, Germany}
\author{B.~Prasanna Venkatesh}\affiliation{Indian Institute of Technology, Gandhinagar, Gujarat 382355, India}
\author{Gentaro Watanabe}\email{gentaro@zju.edu.cn}\affiliation{Department of Physics and Zhejiang Institute of Modern Physics, Zhejiang University, Hangzhou, Zhejiang 310027, China}\affiliation{Zhejiang Province Key Laboratory of Quantum Technology and Device, Zhejiang University, Hangzhou, Zhejiang 310027, China}
%\affiliation{$^1$Department of Physics and Zhejiang Institute of Modern Physics, Zhejiang University, Hangzhou, Zhejiang 310027, China\\
%$^2$Institut f\"ur Physik, Universit\"at Augsburg, Universit\"atsstra{\ss}e 1, 86135 Augsburg, Germany.\\
%$^3$Indian Institute of Technology, Gandhinagar, Gujarat 382355, India.}
%\email{m13007a@math.nagoya-u.ac.jp}
%\author{}
%\affiliation{}
%\email{}

 \begin{abstract}
  The probability densities of work that can be exerted on a quantum system initially staying in thermal equilibrium are constrained by the fluctuation relations of Jarzynski and Crooks, when the work is determined by two projective energy measurements. We investigate the question whether these fluctuation relations may still hold if one employs generalized energy measurements rather than projective ones. Restricting ourselves to a class of universal measurements which are independent of several details of the system on which the work is done, we find sets of necessary and sufficient conditions for the Jarzynski equality and the Crooks relation. The Jarzynski equality requires perfect accuracy for the initial measurement, while the final one can be erroneous. On the other hand, the Crooks relation can only tolerate a depolarizing channel as a deviation from the projective measurement for systems with a finite dimensional Hilbert space. For a separable infinite-dimensional space only projective measurements are compatible with the Crooks relation. The results we have obtained significantly extend those of [Venkatesh, Watanabe, and Talkner, New J. Phys. {\bf 16}, 015032 (2014)] as well as avoid some errors present there.
 %The probability densities of work that can be exerted on a quantum system initially staying in thermal equilibrium are constrained by the fluctuation relations of Jarzynski and Crooks, when the work is determined by two projective energy measurements. We investigate the question whether these fluctuation relations may still hold  if one employs generalized energy measurements rather than projective ones. Restricting ourselves to a class of universal measurements which are independent of several details of the system on which the work is done, we find  sets of necessary and sufficient conditions for the Jarzynski equality and the Crooks relation. The Jarzynski equality requires the perfect accuracy for the initial measurement, while the final one can be erroneous.
% On the other hand, the Crooks relation can only tolerate a depolarizing channel as a deviation from the projective measurement
 \end{abstract}
\maketitle

\section{Introduction}
Work in quantum mechanics, quantum work, has turned out to be a surprisingly non-trivial notion unlike its elementary character in classical mechanics. Presently, there is even no general agreement about how to measure and how to calculate quantum work. As a consequence, a variety of different definitions of quantum work exist, many of which are listed and compared in \cite{Baumer}. 

Here we restrict ourselves to thermally closed systems on which work is performed by a change of one or several external parameters $\lambda(t)$ of the Hamiltonian $H\big (\lambda(t) \big )$ of the system. The parameter change is externally controlled during the time-interval $0\le t \le \tau$ according to a prescribed force protocol $\Lambda = \{ \lambda(t)| 0 \le t \le \tau \}$. We shall further consider only the so-called two-energy measurement schemes (TEMS) consisting of energy measurements immediately before and after the force protocol at $t=0$ and $\tau$, respectively \cite{CHT}. The work performed on the system in a particular realization of the force protocol is given by the difference between the measured initial and final energy. This work is a random quantity described by a probability density function (pdf) $p_\Lambda(w)$ to find the amount of work $w$ for the protocol $\Lambda$. Under the additional conditions that the energy measurements are projective and the initial state is Gibbsian, the average exponentiated work fulfills the Jarzynski equality in the same form as for classical systems~\cite{Jarzynski,Kurchan,Tasaki,TLH}
\begin{equation}
\langle e^{-\beta w} \rangle_\Lambda = e^{-\beta \Delta F}\:,
\label{JE}
\end{equation}
where $\beta$ is the inverse temperature characterizing the initial canonical state 
\begin{equation}
\rho_\beta(\lambda(0)) = Z^{-1}(0)\: e^{- \beta H (\lambda(0))}\:. 
\label{can}
\end{equation}
Here, the free energy difference $\Delta F =F(\lambda(\tau)) -F(\lambda(0))$ referring to the initial and final parameter values results from the corresponding partition functions $Z(t) = \Tr e^{-\beta H (\lambda(t))} = e^{-\beta F(\lambda(t))}$.
Under the above conditions, the Crooks relation~\cite{Tasaki,TH2007,Crooks} also assumes the same form as in the classical case, hence, reading
\begin{equation}
p_\Lambda(w)= e^{-\beta(\Delta F -w)} p_{\bar{\Lambda}}(-w)\:.
\label{CR}
\end{equation}
It relates the pdf $p_\Lambda(w)$ for the protocol $\Lambda =\{\lambda(t)|0 \le t \le \tau \}$ to the work pdf $p_{\bar{\Lambda}}$ for the time-reversed protocol $\bar{\Lambda} = \{\epsilon_\lambda \lambda(\tau -t)| 0 \le t \le \tau \}$, where $\epsilon_\lambda$ denotes the parity of the parameter $\lambda$ under time-reversal. The underlying Hamiltonians conform by assumption with the time-reversal symmetry expressed as $\theta^\dagger H\big (\lambda(t) \big ) \theta = H\big (\epsilon_\lambda \lambda(t) \big )$, where $\theta$ is the time-reversal operator~\cite{Messiah}.

From the Crooks relation, the Jarzynski equality (\ref{JE}) for the forward process as well as the analogous equality for the backward process follows. The latter reads
\begin{equation}
\langle e^{-\beta w} \rangle_{\bar{\Lambda}} = e^{\beta \Delta F}
\label{bJE}
\end{equation}
with the same free energy difference $\Delta F$ as in the forward Jarzynski equality (\ref{JE}).
As one cannot deduce the Crooks relation from the two Jarzynski equalities, the Crooks relation is the stronger one.

Projective measurements, though they present an idealization, are difficult to realize in general. Therefore, it is natural to ask whether there is a more general class of energy measurements (so-called generalized energy measurements) for which the fluctuation theorems of Jarzynski and Crooks still hold~\cite{VWT}. The question about alternative definitions of work is also motivated by the fact that, apart from the case of stationary initial conditions, projective energy measurements erase any correlations in the energy eigenbasis and, consequently, influence the outcome of the second energy measurement and work.

In this context, we note the no-go theorem by Perarnau-Llobet {\it et al.}~\cite{Perarnau-Llobet}. It states the non-existence of an operational definition of work in terms of  generalized measurements fulfilling two requirements: (i) The average work must always agree with the difference between the average Hamiltonian at the final and the initial time. Here, both averages are taken with respect to the initial state that may contain non-diagonal matrix elements in the energy eigenbasis of the initial Hamiltonian. (ii) For initial states diagonal in the eigenbasis of the initial Hamiltonian, the distribution of work should agree with the one obtained by the two projective energy measurement scheme (TPEMS). The first condition, which was formulated earlier in~\cite{Pusz}, and has been frequently used in the study of quantum engines \cite{Alicki,Kosloff}, disregards the quantum mechanical impact of the first energy measurement on the state of the considered quantum system.
%The kind of impact depends on the specific type of measurement.
A generalized energy measurement need not lead to a complete reduction of the state with respect to the energy eigenbasis, but  its impact cannot be completely suppressed if the initial state possesses correlations in the energy basis.
The second requirement might be considered as quite restrictive in demanding perfect agreement with the work distribution following from TPEMS.        

Here we study the problem of the existence of a TEMS with generalized energy measurements for which the Crooks relation and consequently also the Jarzynski equality hold, or for which only the Jarzynski equality is satisfied. With the limitation to TEMS, the class of generalized work measurements is more restricted than the one considered in the mentioned no-go theorem in Ref.~\cite{Perarnau-Llobet}. Here we do not consider work measurement schemes for which the contact between the system and a single measurement apparatus is kept during the force protocol until  a single reading of the  measurement apparatus yields the work. Examples of such ``work-meters'' are discussed in Refs.~\cite{deChiara,HayTaj,TH2016,Cerisola}. In the present paper, we do not impose any extra condition on the average work, nor do we require the agreement of the work distribution with the form following from the TPEMS. We note that the class of generalized energy measurements we consider here within the TEMS is more general than the one considered in Ref.~\cite{VWT} as will be detailed below. Moreover, we avoid the errors contained in Ref.~\cite{VWT}.

The paper is organized as follows. Section~\ref{GM} introduces the notation we use and provides a comprehensive description of the mathematical tools needed to deal with generalized measurements. The form of the work pdf for the TEMS is reviewed in Section~\ref{sec_work}.  Our main results are presented in Section~\ref{results}. In Sections~\ref{Jarzynski} and~\ref{Crooks}, we present discussions leading to the respective conditions on the energy measurements, given in Section~\ref{results}, required for the validity of the Jarzynski equality and the Crooks relation. In Section~\ref{Conclusion} we conclude with a summary, and the appendices provide various technical details.
%%%mark
  
\section{Generalized energy measurements}\label{GM}
Any measurement requires a contact of the system under investigation with a measurement apparatus. To qualify as a measurement, this contact is supposed to change the state of the apparatus in a way that one can infer about the value $\omega$ of the quantity to be measured, but at the same time it also influences the state of the system. For classical systems, the resulting backaction on the system can be made arbitrarily small in principle, but this is not the case for quantum systems. Under the assumption that, prior to the measurement, the system and the apparatus are totally independent of each other, and thus initially stay in a product state, the state of the system immediately after the measurement is characterized by an operation $\phi_\omega$, which depends on the observed result $\omega$. It yields the non-normalized post-measurement state $\phi_\omega (\rho)$, where $\rho$ is the system density matrix before the measurement. As an operation, $\phi_\omega$ is a linear, completely positive (CP) map~\cite{Haag,Kraus} of the  trace-class operators $TC(\mathcal{H})$ on the Hilbert space $\mathcal{H}$ of the system, i.e., those bounded operators with a finite trace of their modulus~\cite{Schatten}. The trace of $\phi_\omega(\rho)$ determines the probability $p(\omega)$ to obtain the outcome $\omega$ of the measurement of the state $\rho$, i.e.,
\begin{equation}
\begin{split}
p(\omega) = \Tr \phi_\omega(\rho)
= \Tr \phi_\omega^*(\mathbb{1}) \rho\:.
\end{split}
\label{po}
\end{equation}    
 After the second equality sign, $p(\omega)$ is expressed in terms of the dual map $\phi_\omega^*$, which acts on bounded operators and is defined as $\Tr \phi^*_\omega(u) \rho = \Tr u \phi_\omega(\rho)$ for all bounded operators $u \in B(\mathcal{H})$ and all trace class operators $\rho \in TC(\mathcal{H})$.
The image $E_\omega=\phi^*_\omega(\mathbb{1})$ of the identity operator $\mathbb{1}$ under the dual operation is also known as the {\it effect} of the operation $\phi_\omega$~\cite{Kraus}. 
Upon normalization, the post-measurement state conditioned on the outcome $\omega$ becomes $\rho^{\text{pm}}_\omega = \phi_\omega(\rho)/ p(\omega)$.

As a CP map, the operation $\phi_\omega$ can be expressed in terms of bounded operators $B_{\omega,l}$, called Kraus operators such that~\cite{Kraus}
\begin{equation}
\phi_\omega(\rho) =\sum_l B_{\omega,l}\, \rho\, B^\dagger_{\omega,l}\:.
\label{Kr}
\end{equation}
%The operation  $\phi_\omega$ determines the minimal number but not the precise form of the Kraus operators.

In passing, we note that the set of operations, $\{\phi_\omega \}_{\omega \in \Omega}$, where $\Omega$ denotes the set of all possible outcomes, is referred to as an {\it instrument}~\cite{DL, Ozawa}. The operation given by the sum over $\Omega$, $\Phi = \sum_\omega \phi_\omega$, is a completely positive, trace preserving (CPTP) map describing the post-measurement state of the nonselective measurement, i.e., the one whose outcome is ignored. For further details about generalized measurements, we refer to the literature~\cite{Ozawa,Hayashi}.

In the present paper, we focus on energy measurements described by operations $\phi_n$, where $n$ indicates the eigenvalue $E_n$ of the Hamiltonian $H = \sum_n E_n \Pi_n$. Here $\{\Pi_n\}_{n \in \mathcal{N}}$ is the set of spectral projectors, satisfying $\Pi_n \Pi_m = \delta_{n,m} \Pi_n$, $\Pi_n=\Pi^\dagger_n$ and $\sum_n \Pi_n = \mathbb{1}$, where $\mathcal{N}\subseteq \mathbb{N}$.
We restrict ourselves to purely discrete spectra of the considered Hamiltonian $H$ because, otherwise, the partition function $Z = \Tr e^{-\beta H}$ and hence the free energy appearing in the fluctuation theorems does not exist.

The measurement typically contains errors that can be quantified by the probability $p(m|n)$ with which the energy $E_m$ is assigned to a state $\rho_n=\Pi_n/ \Tr \Pi_n$ with the energy $E_n$. This probability is given by
\begin{equation}
p(m|n) = \Tr \phi_m(\rho_n)\:.
\label{pmn}
\end{equation}
A measurement is error free when the error probability collapses to a Kronecker delta, i.e., $p(m|n) = \delta_{m,n}$. In this case, the operations $\phi_m$ take the following form:
\begin{equation}
\phi_m(\rho) = \mathcal{E}_m(\Pi_m\rho \Pi_m)\:,
\label{ef}
\end{equation}
where $\mathcal{E}_m$ is an arbitrary CPTP map, referred to as a quantum channel in the context of quantum information theory, depending on $m$. This means that any error-free measurement can be considered as a projective measurement followed by a trace-preserving operation.
A proof of Eq.~(\ref{ef}) is presented in Appendix~\ref{errorfree}.

For later use we mention that, as an immediate consequence of Eq.~(\ref{ef}), the domain of an error-free map can be extended from trace-class to bounded operators.
Especially, we have
\begin{equation}
\phi_n(\mathbb{1}) = \phi_n(\Pi_n)\:.
\label{phin1}
\end{equation}

  \section{Work statistics}\label{sec_work}
  As described in the introduction, we consider a thermally isolated system which undergoes the dynamics governed by a time-dependent Hamiltonian $H\big (\lambda(t)\big )$. The parameter stays constant at $\lambda(t) =\lambda(0)$ for $t\leq 0$, then changes in a prescribed way according to the  force protocol $\Lambda =\{ \lambda(t)|0\leq t \leq \tau\}$ to continue with the constant value $\lambda(t) = \lambda(\tau)$ for $t \geq \tau$.
  According to the TEMS, generalized energy measurements are performed at $t=0$ and $t=\tau$ with instruments $\{ \phi^{(0)}_n \}_{n \in \mathcal{N}_{0}}$ and $\{ \phi^{(\tau)}_n \}_{n \in \mathcal{N}_{\tau}}$ respectively, where $\mathcal{N}_0, \mathcal{N}_{\tau} \subseteq \mathbb{N}$, and $\phi^{(t)}_n$ is the measurement operation conditioned on the observation of the eigenvalue $e_n(t)$ of the Hamiltonian, which can be expressed as
\begin{equation}
H\big (\lambda(t)\big ) = \sum_{n\in \mathcal{N}_{t}} e_n(t) \Pi_n(t) \quad (t=0,\tau)\:,
\label{Ht}
\end{equation}
where $\Pi_n(t)$ is the projector on the eigenspace with energy eigenvalue $e_n(t)$. The dimension of this eigenspace is denoted by $d_n(t) = \Tr \Pi_n(t)$.
 The joint probability $p_\Lambda(m,n)$ to obtain $e_m(0)$ in the first measurement and $e_n(\tau)$ in the second measurement, is given by the trace of the non-normalized density matrix resulting from the first measurement operation, followed by the unitary time evolution from $t=0$ to $\tau$, and finally the second measurement operation:
\begin{equation}
\begin{split}
p_\Lambda(m,n) &= \Tr \phi^\tau_m (U_\Lambda \phi^0_n \big (\rho_\beta(0) \big) U^\dagger_\Lambda)\\
&= Z^{-1}(0) \sum_k p_\Lambda(m,n|k) \: d_k(0)\: e^{-\beta e_k(0)}\:.
\end{split}
\label{pLmn}
\end{equation} 
Here 
\begin{equation}
U_\Lambda = \mathcal{T} e^{-\frac{i}{\hbar}\int_0^\tau H  (\lambda(t)) dt} 
\label{UL}
\end{equation}
is the time-evolution from $t=0$ to $\tau$ with $\mathcal{T}$ denoting the chronological time ordering symbol. In the second line of Eq.~(\ref{pLmn}), $p_\Lambda(m,n|k)$ denotes the joint probability of observing the energies $e_m(\tau)$ and $e_n(0)$ conditioned on the initial state $\Pi_k(0)/d_k(0)$, where $d_k(0) = \Tr \Pi_k(0)$ is the degree of degeneracy of the eigenenergy $e_k(0)$. Thus this probability becomes
\begin{equation}
p_\Lambda(m,n|k) = \Tr \phi^\tau_m \left(U_\Lambda \phi^0_n \big (\Pi_k(0)\big )U^\dagger_\Lambda \right)/d_k(0)\:.
\label{pLmnk}
\end{equation}
 Hence, the work pdf can be expressed as
\begin{equation}
p_\Lambda(w) = \sum_{m,n} \delta \big ( w-e_m(\tau) + e_n(0) \big ) p_\Lambda(m,n)\:,
\label{pLw}
\end{equation}
where $\delta(x)$ is the Dirac $\delta$-function.
\section{Results for universal generalized measurements}\label{results}
The generalized measurements considered here are supposed to be {\it universal} in the sense that the same energy meters can be used independently of the properties of the particular system, the temperature, and the actual dynamics during the force protocol. 
In other words, the measurement operations $\phi^t_m$ with $t=0$ and $\tau$ are supposed to be independent of
\begin{enumerate}
 \item[(a)] the magnitudes of the energies $e_m(t)$,\\
 \item[(b)] the inverse temperature $\beta$ characterizing the initial state, and
 \item[(c)] the unitary dynamics $U_\Lambda$ defined in Eq.~(\ref{UL}).
\end{enumerate}

    For this class of universal measurement instruments, we find the following necessary and sufficient conditions for the Jarzynsky equality~(\ref{JE}) to hold:
    \begin{enumerate}
     \item[(JI)] For a system with a Hilbert space of finite dimension $D= \dim (\mathcal{H})$, the first measurement  must be error free and either one of the following (Ji) or (Jii) has to be satisfied:
		 \begin{enumerate}
		  \item[(Ji)] (a) the first non-selective measurement $\Phi^0 = \sum_n\phi^0_n$ is unital, i.e., $\Phi^0(\mathbb{1}) =\mathbb{1}$, and\\
		  \item[] (b) $\Tr \phi^{\tau *}_m(\mathbb{1}) = d_m(\tau)$;\\
		  \item[(Jii)] the effects $E^\tau_m = \phi^{\tau*}_m(\mathbb{1})$ of the second measurement are proportional to the identity matrix reading $E^\tau_m = [d_m(\tau)/D] \mathbb{1}$;
		 \end{enumerate}
  \item[(JII)] For systems with a separable infinite-dimensional Hilbert space, the first measurement must be error free. Further, the dual operations of the second instrument must map bounded operators onto trace-class operators such that $\Tr \phi^{\tau*}_m(\mathbb{1}) = d_m(\tau)/a$ and $\sum_n \phi^0_n(\mathbb{1}) = a \mathbb{1}$ with a positive constant $a$.
    \end{enumerate}    
Condition (Ji)(a) is equivalent to the requirement that the von Neumann entropy of the post-measurement state of the first non-selective measurement is greater than or equal to the von Neumann entropy of the state before the measurement \cite{Hayashi},
 i.e., $-\Tr \Phi^0(\rho) \ln \Phi^0(\rho) \geq -\Tr \rho \ln \rho$. 

As explained above, the Crooks relation (\ref{CR}) establishes a connection between the 
work pdf for the forward process running under the force protocol $\Lambda$ and that for the backward process under the time-reversed protocol $\bar{\Lambda}$. Both processes start from a canonical equilibrium state at the respective parameter values, at the same inverse temperature $\beta$. The measurement instruments $\phi^t_n$ and $\bar{\phi}^t_n$ of the forward and the backward processes, respectively, are also time-reversed with respect to each other, satisfying 
\begin{equation}
\bar{\phi}^t_m(\rho) =\theta^\dagger\phi^{\tau-t}_m(\theta \rho \theta^\dagger) \theta \quad (t=0,\tau)\:.
\label{tr}
\end{equation}
% where, as above, $\theta$ denotes the time reversal operator.

  For processes between non-degenerate initial and final Hamiltonians and universal measurement instruments as introduced above, we obtain the following necessary and sufficient conditions for the Crooks relation to hold:
  \begin{enumerate}
   \item[(CI)] For systems with a Hilbert space of finite dimension $D$, the measurement operations must be of the form
\begin{align}
 \hspace{30pt} \phi^t_n(\rho) = (1-\alpha) \frac{\Tr \Pi_n(t) \rho}{D}\mathbb{1} + \alpha  \Pi_n(t)  \rho  \Pi_n(t)& \nonumber\\
 (t=0,\tau)& \label{CD}
\end{align}
	       with $-(D-1)^{-1} \leq \alpha \leq 1$, which can be realized as a combination of a projective measurement and a depolarizing channel.\\
   \item[(CII)] For systems with a separable infinite-dimensional Hilbert space, only projective energy measurements are allowed.
  \end{enumerate}

For initial or final Hamiltonians with degenerate spectra, the same conditions (CI) and (CII) can be obtained with an additional universality requirement, which will be specified in Section~\ref{w_degeneracy}.

In Ref.~\cite{VWT}, only measurement operations with a single Kraus operator were considered, i.e., all the operations considered in Ref.~\cite{VWT} are restricted to  the special form $\phi^{(t)}_n(\rho) = M_n(t) \rho M^\dagger_n(t)$ with $t=0$ and $\tau$. This excludes the presence of a depolarizing channel for systems with a Hilbert space of dimension higher than two. As a condition for the Crooks relation to hold, the initial and final measurements must be error free according to Ref.~\cite{VWT}, in agreement with the results of the present work. The conclusion in Ref.~\cite{VWT} advocating the restriction to projective measurements is also in agreement with the present findings restrained to measurement operations with a single Kraus operator except for the case of two-dimensional Hilbert space, even though it is based on an incorrect argument. However, in the case of two-dimensional Hilbert space, the instrument given by Eq.~\eqref{CD} with $\alpha = -1$ can be described by a single Kraus operator as $\phi_0^{t}(\rho) = \ketbra{1;t}{0;t}\rho\ketbra{0;t}{1;t}$ and $\phi_1^{t}(\rho) = \ketbra{0;t}{1;t}\rho\ketbra{1;t}{0;t}$, where $\ket{n;t}$ is the $n$-th level energy eigenstate of $H(\lambda(t))$. This disagreement is actually caused by a mathematically incorrect argument in Ref.~\cite{VWT}. The conditions for the Jarzynski equality formulated in Ref.~\cite{VWT}, though its proof is also incorrect, corresponds to condition (JII) with $d_m(\tau)=1$ and $a=1$. In the finite-dimensional case, while the condition (Ji) with $d_m(\tau)=1$ is found, the second possibility (Jii) is missed in Ref.~\cite{VWT}.
%%%%%mark : they showed just a conjecture. the result was correct by mere chance.

\section{Jarzynski equality}\label{Jarzynski}
  In this section, we focus on the necessary and sufficient condition on the energy measurements for the validity of the Jarzynski equality (\ref{JE}):
  \begin{align}
   \int p_{\Lambda}(w)\: e^{-\beta w} dw = e^{-\beta \Delta F}\label{JE_11}
  \end{align}
 with $e^{-\beta \Delta F} =Z(\tau)/Z(0)$.  Using the work pdf $p_\Lambda(w)$ defined by Eq.~(\ref{pLw}) in combination with Eq.~(\ref{pLmn}), we obtain from Eq. (\ref{JE_11})  
% under the setup in section \ref{sec_work}.
\begin{align}
   \sum_{m, n, k} e^{-\beta(e_k(0)-e_n(0))}p_{\Lambda}(m,n|k)\hspace{1pt} d_k(0)\frac{e^{-\beta e_m(\tau)}}{Z(\tau)} = 1.\label{JE_mod}
  \end{align}
We now determine the form of those generalized energy measurement operations that are universal in the sense defined by conditions (a) -- (c) in Sec. \ref{results}.
For this purpose, we consider necessary conditions that follow from a restricted class of spectra of the initial Hamiltonian. Later we will prove the sufficiency of the conditions found in this manner. As a restriction on the energy spectrum of the initial Hamiltonian, we demand that the energy differences are not degenerate in that the differences between pairs of energies are distinct for different pairs, i.e.,
\begin{equation}
e_n(0) -e_m(0) = e_k(0)-e_l(0) \Leftrightarrow n=k \; \text{and} \; m=l.
\label{C1}
\end{equation}
Further, we consider a family of spectra satisfying this condition, parameterized by a real parameter $x$ with energy eigenvalues $e_{x,n}(0)=x e_n(0)$. For fixed $\beta>0$, the exponential factors $e^{-\beta(e_{x,k}(0) -e_{x,n}(0))}$ for different pairs of $(k, n)$ with $n \neq k$ are linearly independent functions of $x$. Hence, all the terms of the sum in Eq.~(\ref{JE_mod}) with $n\neq k$ must separately vanish, yielding
\begin{equation}
p_\Lambda(m,n|k) =0 \quad \text{for}\; n\neq k\:.
\label{pmnk0}
\end{equation}    
Hence, the first measurement is error free. Then Eq.~(\ref{JE_mod}) reads
  \begin{align}
   \sum_{m,n}p_{\Lambda}(m,n|n)\frac{d_n(0)}{d_m(\tau)} P_m = 1,
  \end{align}
  where $P_m = e^{-\beta e_m(\tau)}d_m(\tau)/Z(\tau)$ is the probability of finding $e_m(\tau)$ in the canonical state of the system for the final parameter $\lambda(\tau)$ at the inverse temperature $\beta$ of the initial state.
Because of the universality requirements (a) and (b), this condition must hold for any spectrum of the final Hamiltonian, $\{e_m(\tau)\}$, and consequently it must hold for arbitrarily chosen probabilities $P_n$ implying 
  \begin{align}
   \sum_{n}p_{\Lambda}(m,n|n)d_n(0) = d_m(\tau) \quad \text{for all} \; m.
\label{Key_JE}
  \end{align}
 Combined with Eq.~(\ref{pLmnk}), this condition becomes 
  \begin{align}
   \Tr \left[U_\Lambda^{\dagger} \phi_m^{\tau *}(\mathbb{1}) U_\Lambda \sum_{n}\phi_n^{0} (\Pi_n(0)) \right] = d_m(\tau).\label{Key_JE_0}
  \end{align}
According to the universality requirement (c), this condition must hold 
for any unitary time-evolution operator $U_\Lambda$. 
To evaluate this condition further, we first consider systems with a Hilbert space of finite dimension $D$. 
Then, according to Lemma~\ref{Lem_Jarzynski} in Appendix \ref{PoKL}, either of the following conditions must hold:
\begin{enumerate}
 \item[(Ji)] $\sum_n \phi^0_n(\mathbb{1}) = \mathbb{1}\;$  and  $\; \Tr \phi^{\tau *}_m(\mathbb{1}) =d_m(\tau)$\:. \\
 \item[(Jii)] $\phi^{\tau *}_m(\mathbb{1}) = [d_m(\tau)/D] \mathbb{1}$\:.
\end{enumerate}
%%%%%%%%%%%%%%
To obtain (Ji) from Lemma~\ref{Lem_Jarzynski} we choose $A= \sum_n \phi^0_n(\Pi_n) = \sum_n \phi^0_n(\mathbb{1})$, where the second equality results from Eq.~(\ref{phin1}), and $B=\phi^\tau_m(\mathbb{1})$, implying $r= d_m(\tau)$; with $\Tr \sum_n \phi^0_n(\mathbb{1}) = D$ one finds $a=1$.  
%The second alternative results from choosing $A= \phi^{\tau *}_m(\mathbb{1})$ and $B= \sum_n \phi^0_m(\mathbb{1})$ yielding $\Tr B =D$ and hence $a=d_m(\tau)/D$.
%%%%%%%%%%%%%%

We note that the first part of condition (Ji) implies that the operation $\Phi^0= \sum_n\phi^0_n$ characterizing the non-selective measurement by the first instrument $\{\phi^0_n \}$ is unital, i.e., $\Phi^0(\mathbb{1}) = \mathbb{1}$. This statement equivalently expresses the fact that the von Neumann entropy does not decrease under the non-selective measurement operation $\Phi^{0}$ \cite{Hayashi}.
%\cite{unital}
Hence, this condition demands that the first measurement being performed non-selectively, does not decrease the von Neumann entropy.
%%%%%%%%%%
This observation is compatible with the fact that the Jarzynski relation does not hold in its original form for feedback-control protocols~\cite{Sagawa,Morikuni,WVTCH}, where the von Neumann entropy may decrease.

The second part of condition (Ji) implies that the second measurement can be erroneous as long as $\Tr \phi^{\tau *}_m (\mathbb{1})=d_m(\tau)$, i.e., the trace of the effect of the second measurement is given by the degeneracy of each corresponding energy eigenvalue $e_m(\tau)$.
In fact, we have the following example of a quite erroneous energy measurement satisfying condition (Ji):
any instrument $\{\phi_m^{\tau}\}_{m\in\mathcal{N}_{\tau}}$ with the effect
\begin{align}
 \phi_m^{\tau*}(\mathbb{1}) = E_m^{Q} := \sum_i Q(m|i) \ketbra{\psi_i},
\end{align}
e.g.~$\phi_m^{\tau}(\rho) = \sqrt{E_m^{Q}}\rho \sqrt{E_m^{Q}}$,
where $\{\ket{\psi_i}(i=1,2,\cdots)\}$ is an arbitrary orthonormal basis and $Q(m|i)$ is a stochastic matrix, i.e., $\sum_m Q(m|i)=1$, satisfying $\sum_i Q(m|i) = d_m(\tau)$.
Especially, if $H(\lambda(\tau))$ is nondegenerate, $Q(m|i)$ is an arbitrary doubly stochastic matrix.
As $\{\ket{\psi_i}(i=1,2,\cdots)\}$ is an arbitrary basis and $Q(m|i)$ is generic enough, such a measurement can be quite erroneous.

Together with the requirement that the first measurement has to be error free, the condition (Jii) implies that the conditional probability
$p(m|n) =\Tr \phi_m^\tau\big (U_\Lambda \phi^0_n(\Pi_n) U^\dagger_\Lambda \big) = d_m(\tau)/D$ 
is independent of $n$. Therefore, the joint probability factorizes as $p(m,n) =(d_m(\tau)/D)\hspace{1pt} p^0_\beta(n)$ with $p^0_\beta(n) = Z^{-1}(0) \hspace{1pt}d_n(0)\hspace{1pt} e^{-\beta e_n(0)}$, and hence the results of the first and the second energy measurements are independent of each other. As a consequence, all possible correlations caused by the time-evolution throughout the force protocol would be suppressed by a measurement instrument that complies with the condition (Jii). Even though such an instrument satisfies the Jarzynski equality, it can hardly be considered an useful work meter.

For systems with a separable infinite-dimensional Hilbert space, (Jii) is excluded since $D$ is infinity and $\phi_m^{\tau*}(\mathbb{1}) = 0$ is not eligible as an effect.
%the second alternative is excluded because the trace of $\sum_n \phi^0_n(\mathbb{1})$ does not exist and hence it is not a trace class operator.
Therefore, from Lemma \ref{Lem_Jarzynski}, we obtain the following necessary and sufficient condition for Eq.~(\ref{Key_JE_0}) to hold:
\begin{enumerate}
 \item[(Ji$^*$)] $\sum_n \phi_n^0(\mathbb{1}) = a \mathbb{1}$ and $\Tr \phi^{\tau *}_m(\mathbb{1}) ={d_m(\tau)}/{a}$, with a positive number $a$\:.
\end{enumerate}

It is straightforward to demonstrate that each of the conditions (Ji), (Jii), and (Ji$^*$) leads to the Jarzynski equality. Hence these constitute the sought-for necessary and sufficient conditions on any universal measurement scheme for the validity of the Jarzynski equality.
We note that even if the unitary dynamics $U_{\Lambda}\rho U_{\Lambda}^{\dagger}$ is replaced with a unital quantum channel $\mathcal{U}(\rho)$, the same conditions are sufficient for the Jarzynski equality. Thus, the same statement also holds for unital quantum channels as the dynamics.
This class of open-system dynamics is known to satisfy fluctuation relations within TPEMS \cite{Rastegin13, Smith, Rastegin18}.
%Indeed, unital quantum channels are known as a specific kind of dynamics of open quantum systems which satisfies fluctuation theorems in some contexts \cite{Rastegin13, Smith, Rastegin18}.
%%%%%%%%%%%mark

In closing this section, we discuss the consequences of the simultaneous validity of the forward and backward Jarzynski equalities given by Eqs.~(\ref{JE}) and (\ref{bJE}), respectively.   
As the first consequence, we infer that the first measurement instruments of the forward and the backward processes must be error free. Because the measurement operations of these two processes are mutually time-reversed to each other as expressed in Eq.~(\ref{tr}), the second measurement of the forward process corresponds to the first measurement of the backward process, which is error free.
%%%%%%%%%%%%%
%as well as the first measurement of the backward process, also
Consequently, the second measurement of the forward process must be error free and has to satisfy either (Ji) or (Jii) in the finite-dimensional case, or (Ji*) for systems with a separable infinite Hilbert space.
Hence, the operations for the first and the second measurements of the forward process are of the form
\begin{equation}
\phi^t_m(\rho) = \mathcal{E}^t_m\big (\Pi_m(t) \rho \Pi_m(t)\big ) \quad (t =0,\tau).
\label{phit}
\end{equation}   
%for all $\rho \in TC(\mathcal{H})$.
Here $\{\mathcal{E}^t_m\}$ are trace preserving completely positive maps on $TC(\mathcal{H})$. 
Additionally, the measurement operations must satisfy either the condition (Ji) or (Jii) for systems with finite-dimensional Hilbert spaces. In the case of (Ji), the second part of the condition is automatically satisfied.
In the infinite-dimensional case,
because the second measurement is also error free,
$\phi^{\tau *}_m(\mathbb{1})=\Pi_m(\tau)$ holds as proved in Appendix \ref{errorfree}.
%one finds from $\phi^{\tau *}_m(u) = \Pi_m(\tau) \mathcal{E}^{\tau*}_m(u) \Pi_n(\tau)$, $u \in B(\mathcal{H})$ for $u = \mathbb{1}$ for $\phi^{\tau *}_m(\mathbb{1})= \Pi_m(\tau) \mathcal{E}^{\tau *}_m(\mathbb{1})\Pi_m(\tau) = \Pi_m(\tau)$ because the dual of a trace-preserving map is unital, i.e., $\mathcal{E}^{\tau*}_m(\mathbb{1}) = \mathbb{1}$.
Hence, $\Tr \phi_m^{\tau *} (\mathbb{1}) =d_m(\tau)$ yields $a=1$.
  
  \section{Crooks relation} \label{Crooks}
  \subsection{Necessary and sufficient conditions}
As noted in the introduction, the Crooks relation implies the Jarzynski equalities for the forward and the backward processes given by Eqs. (\ref{JE}) and (\ref{bJE}), respectively. Therefore, the necessary conditions on the measurement operations for the validity of the Jarzynski equalities must equally hold for the Crooks relation implying that the first operations of both forward and the backward protocol must be error free.
As a consequence, the work pdfs for the forward and the backward protocols can be expressed as
\begin{align}
 p_\Xi(w) = \sum_{m,n} \delta\big (w -e^\Xi_m(\tau) + e^\Xi_n(0)\big ) & p_\Xi(m, n |n)  p^\Xi_n \nonumber\\
 &\hspace{10pt}(\Xi = \Lambda,\bar{\Lambda})\:,
\label{pXw}
\end{align}
where $e^\Lambda_m(t) =e_m(t)$, $e^{\bar{\Lambda}}_m(t) = e_m(\bar{t})$ with $t= 0,\tau$, and $\bar{0}=\tau$ as well as $\bar{\tau} =0$. Further, $p^\Lambda_n =Z^{-1}(0) \hspace{1pt}e^{-\beta e_n(0)} d_n(0)$ and $p^{\bar{\Lambda}}_n = Z^{-1}(\tau)\hspace{1pt} e^{-\beta e_n(\tau)} d_n(\tau)$. 
Putting these expressions into the Crooks relation~(\ref{CR}), we get
\begin{equation}
\begin{split}
&\sum_{m,n} \delta \big (w - e_m(\tau) + e_n(0) \big ) e^{-\beta e_n(0)}\\
& \times \left [ p_\Lambda(m,n|n) d_n(0) - p_{\bar{\Lambda}}(n, m|m) d_m(\tau) \right ] =0\:.
\end{split}
\label{pmn0}
\end{equation}  
According to the universality requirements (a) and (b) given in Section \ref{results}, Eq.~(\ref{pmn0}) must hold for all possible energy spectra of the initial and the final Hamiltonian. In particular, this must be satisfied for those spectra for which the allowed work values are not degenerate in the sense that $e_m(\tau) - e_n(0) = e_k(\tau) - e_l(0)$ holds only if $m=k$ and $n=l$. Hence, in order that Eq.~(\ref{pmn0}) is satisfied for all possible work values, the expression in the square bracket must vanish for all $m$ and $n$. Therefore, as a necessary condition for the Crooks relation to hold, the following generalized detailed balance condition~\cite{TMYH} must be satisfied:
\begin{equation}      
p_\Lambda(m,n|n) d_n(0) = p_{\bar{\Lambda}}(n, m|m) d_m(\tau)
\label{dbr}
\end{equation}
in accordance with the previous findings in \cite{VWT}.

%%%%%%%%%%%%%%%%%%%%current
Specializing the conditional probabilities as given by Eq.~(\ref{pLmnk}) for error-free measurements, we obtain
\begin{equation}
\begin{split}
p_\Lambda(m,n|n) &= \Tr \phi^\tau_m \big(U_\Lambda \phi^0_n(\Pi_n(0)) U^\dagger_\Lambda \big ) /d_n(0),\\
p_{\bar{\Lambda}}(m,n|n)&= \Tr \bar{\phi}^\tau_m \big (U_{\bar{\Lambda}} \bar{\phi}^0_n(\bar{\Pi}_n(\tau)) U^\dagger_{\bar{\Lambda}} \big ) /d_n(\tau)\\
&= \Tr \phi^0_m\big ( U^\dagger_\Lambda \phi^\tau_n \big (\Pi_n(\tau) \big ) U_\Lambda \big )/d_n(0)\:.
\end{split}
\label{pmn}
\end{equation}
In going to the last line, we have used the time reversal relations for the time-evolution operator, $\theta^\dagger U_\Lambda \theta = U^\dagger_{\bar{\Lambda}}$, for the projection operator $\bar{\Pi}_n(\tau) = \theta \Pi_n(\tau) \theta^\dagger$, and for the measurement operation (\ref{tr}). Using the expression (\ref{phit}) for the measurement operation together with (\ref{pmn}), we obtain the following condition on the trace preserving maps $\mathcal{E}^t_n$ from the detailed balance relation (\ref{dbr}):
\begin{equation}
\Tr U_\Lambda \mathcal{E}^0_n\big (\Pi_n(0) \big) U^\dagger_\Lambda \Pi_m(\tau) = \Tr U^\dagger_\Lambda \mathcal{E}^\tau_m\big (\Pi_m(\tau) \big ) U_\Lambda \Pi_n(0)\:.
\label{db2}
\end{equation}
According to the universality requirement (c) formulated in Section~\ref{results}, this condition must be satisfied for all possible protocols connecting the respective initial and final Hamiltonians. In particular, the protocol may consist of an initial sudden quench from $H\big (\lambda(0)\big )$ to an arbitrary Hamiltonian $H$ which may be kept constant until the final time when the Hamiltonian is suddenly quenched to the final Hamiltonian $H\big (\lambda(\tau) \big )$. Because the sudden quenches at the beginning and the end of the force protocol do not contribute to the time evolution operator, for this particular protocol it is given by $U_\Lambda =e^{-(i \tau/\hbar) H}$. With a proper choice of $H$, any unitary operator $U$ can be realized in this way. Hence, the detailed balance-like condition (\ref{db2}) must be satisfied for any unitary operator $U_\Lambda$.       
Conversely, Eq.~(\ref{db2}) presents an equivalent formulation of the detailed balance relation Eq.~(\ref{dbr}), which implies the validity of Eq.~(\ref{pmn0}) and consequently that of the Crooks relation~(\ref{CR}) as well. Therefore, the Crooks relation universally holds if and only if the measurement operations $\phi^t_m$ with $t=0$ and $\tau$ are error free and satisfy Eq.~(\ref{db2}) for any protocol $\Lambda$.

  \subsection{Nondegenerate case}\label{nondegenerate}
In discussing the condition (\ref{db2}), we first restrict ourselves to initial and final Hamiltonians with non-degenerate spectra. Hence, the projection operators $\Pi_n(t) =|n;t\rangle \langle n;t|$ with $t=0$ and $\tau$ yields pure states. The condition (\ref{db2}) then becomes
\begin{equation}
\begin{split}
&\langle m;\tau|U_\Lambda \mathcal{E}^0_n\big (|n;0 \rangle \langle n;0| \big) U^\dagger_\Lambda |m; \tau \rangle  \\
=& \langle n;0|U_\Lambda^{\dagger} \mathcal{E}^\tau_m \big (|m;\tau \rangle \langle m;\tau| \big ) U_\Lambda |n; 0 \rangle.
\end{split}
\label{db3}
\end{equation}
Because it must be satisfied for all unitary operators $U_\Lambda$, one finds, using Lemma 4 from Appendix \ref{PoKL}, that Eq.~(\ref{db3}) holds if and only if the trace preserving, completely positive operations $\mathcal{E}^t_n$ act on the associated projection operators as
\begin{equation}
\mathcal{E}^t_n\big (|n;t \rangle \langle n;t| \big ) = \frac{1-\alpha}{D} \mathbb{1} + \alpha   |n;t \rangle \langle n;t|\:.
\label{Etn}
\end{equation}
Here, we additionally have that the Hilbert space is of finite dimension $D= \dim (\mathcal{H})$ and the parameter $\alpha$ satisfies $-(D-1)^{-1} \leq \alpha \leq 1$. In combination with the expression $\phi_n^{t}(\rho) = \mathcal{E}_n^{t}(\ketbra{n;t}\rho\ketbra{n;t}) = \bra{n;t}\rho\ket{n;t}\mathcal{E}_n^{t}(\ketbra{n;t})$, we obtain
\begin{equation}
\phi^t_n(\rho) =\langle n;t|\rho| n;t \rangle  \left ( \frac{1-\alpha}{D}  \mathbb{1} + \alpha|n;t\rangle \langle n;t| 
\right ).
\label{phitn}
\end{equation}
For systems with a separable infinite-dimensional Hilbert space, the only universal measurements are projective and hence given by
\begin{equation}
\phi^t_n(\rho) = \langle n;t |\rho | n;t \rangle |n;t\rangle \langle n;t|.
\label{phip}
\end{equation}

The action \eqref{phitn} can be accomplished by a so-called depolarizing channel $\mathcal{E}_\alpha$~\cite{Hayashi, channel} acting as
\begin{equation}
\mathcal{E}_\alpha(\rho) =  (1-\alpha)\frac{\Tr \rho}{D} \mathbb{1} + \alpha \rho  \:
\label{El}
\end{equation}
with $-(D^2 -1)^{-1} \leq \alpha \leq 1$ or, likewise by a transpose depolarizing channel~\cite{DHS}
\begin{equation}
\Delta_\alpha(\rho) =  (1-\alpha)\frac{\Tr \rho}{D} \mathbb{1} + \alpha \rho^T  \:,
\label{Elt}
\end{equation}
where the transpose operation $^T$ is defined with respect to the energy eigenbasis and $-(D-1)^{-1}\leq \alpha \leq (D+1)^{-1}$. We note that the different ranges of $\alpha$ guarantee that the resulting operations are completely positive.

In fact, note that it does not matter how the channels $\mathcal{E}^t_n$ are globally defined as long as they act on the energy eigenstate $|n;t \rangle \langle n;t|$ according to Eq.~(\ref{Etn}). The choice made in Eqs. (\ref{El}) and (\ref{Elt}) has a remarkable property of being independent of the index $n$ specifying the argument of the channel. While the transpose depolarizing channel $\Delta_\alpha$ still depends on the basis with respect to which the transpose operation is performed, the depolarizing channel $\mathcal{E}_\alpha$ is completely independent of the basis and hence can be considered as a  {\it universal channel}.
We note that $\mathcal{E}_{\alpha}$ is the only universal channel compatible with Eq.~(\ref{Etn}). For a universal channel, Eq.~(\ref{Etn}) must also hold if $|n;t\rangle$ is replaced by an arbitrary normalized element of the Hilbert space. By linearity, then Eq.~(\ref{El}) follows.
Thus, with the universal channel~(\ref{El}), the range of $\alpha$ is restricted to
$-(D^2-1)^{-1} \leq \alpha \leq 1$.

  \subsection{With degeneracy}\label{w_degeneracy}
  Even if the initial or the final Hamiltonian has degenerate spectra, one finds by inspection that the TEMS with the measurement operations defined as
  \begin{equation}
\phi^t_n(\rho) = \left \{\begin{array}{ll}
&(1-\alpha)D^{-1}\Tr [\Pi_n(t) \rho] \mathbb{1}  + \alpha \Pi_n(t) \rho \Pi_n(t)\\*[1mm]
 &\hspace{30pt}(-(D^2 -1)^{-1} \leq \alpha \leq 1 \text{ and }  D < \infty),\vspace{5pt}\\*[2mm]
 &\Pi_n(t) \rho \Pi_n(t) \quad (D= \infty)
\end{array}
\right .
\label{phitnu}
\end{equation}
 implies the validity of the Crooks relation.
 We note that the parameter $\alpha$ must have the same value for the initial and the final measurements.

 On the other hand, we obtain Eq.~\eqref{phitnu} as a necessary condition for the Crooks relation to hold regardless of the degeneracy of the initial and final Hamiltonians if we restrict ourselves to the requirements (a)--(c) and additionally to universal channels, satisfying $\mathcal{E}^t_n =\mathcal{E}^{t}$ independently of the eigenprojections of each Hamiltonian.
 That is verified by observing that $\mathcal{E}_n^t$ should be the same as that for nondegenerate Hamiltonians in order that the channels are independent of the eigenprojections of each Hamiltonian.
  Hence, the argument for nondegenerate Hamiltonians in the previous section uniquely specifies $\mathcal{E}_n^{t}$ to be the depolarizing channel $\mathcal{E}_{\alpha}$ with $-(D^2 - 1)^{-1} \leq \alpha \leq 1$.
  Therefore, Eq.~(\ref{phitnu}) represents necessary and sufficient conditions for the restricted class of universal measurement operations with universal channels.
 %%%%mark(changed)
 
We note that in the finite-dimensional case given by the first line of Eq.~(\ref{phitnu}), projective measurements contribute with a factor $\alpha$, which may even be negative, while the first term with the factor $(1-\alpha)$ leaves a maximally mixed state after the measurement. Furthermore, for the conditional probability as specified in Eq.~(\ref{pmn}), one obtains a mixture of the projective two-point measurement result $p^{2p}_\Lambda(m|n) := \Tr \Pi_m(\tau) U_\Lambda \Pi_n(0) U^\dagger_\Lambda d^{-1}_n(0)$ and a totally random choice of the second energy. Hence, one finds 
\begin{equation}
p_\Lambda(m,n|n) = (1-\alpha)\frac{d_m(\tau)}{D} + \alpha p^{2p}_\Lambda(m|n)\:, 
\label{pmnm}
\end{equation}     
with $-(D^2-1)^{-1} \leq \alpha \leq 1$. While the contribution from the projective measurements depends on the dynamics, the other part is completely independent of any dynamics taking place during the force protocol, therefore giving energy values by mere chance.  Finally, in the case of separable infinite-dimensional Hilbert space, only projective energy measurements are allowed.

  \section{Conclusion}\label{Conclusion}
We have discussed necessary and sufficient conditions that have to be imposed on generalized energy measurements determining the work exerted on a system by a force protocol in order that the fluctuation theorems of Jarzynski and Crooks are obeyed like in the case of projective energy measurements. In our analysis, we restricted ourselves to a class of universal measurements that are characterized by measurement operations being independent of the spectra of the Hamiltonians of the system immediately before and after the force protocol, as well as of the temperature of the initial state.

When restricting to these universal measurements, a necessary condition for the Jarzynski equality to hold requires that the first measurement is error free.
The backaction of any error-free measurement can be visualized as a projective measurement followed by a quantum channel.
Remarkably, we find that the Jarzynski equality is satisfied even if the second measurement is quite erroneous as long as the associated effects satisfy a normalization condition [see J(i) in Sec.~\ref{Jarzynski}].
Although a quantum work based on such an erroneous measurement does not coincide with the energy difference that would be obtained with projective energy measurements,
it may still be useful for the estimation of the free energy difference. That is a possible advantage of a generalized notion of work compatible with the Jarzynski equality.
If one additionally requires that the Jarzynski equality is fulfilled for the backward process, then the second energy measurement has to also be error free.
This fact yields that both measurements must be error free for the Crooks relation
since the Crooks relation implies the validity of the Jarzynski equality for both processes.

The Crooks relation requires a generalized detailed balance relation. It connects the forward and backward conditional probabilities based on generalized energy measurements.
For processes with non-degenerate initial and final Hamiltonians, this detailed balance relation entails that the quantum channel following the projective measurement must be a depolarizing channel if it is independent of the protocol.
Especially, this channel turns out to be independent of the outcome.
For systems with a separable infinite-dimensional Hilbert space, there are no non-trivial depolarizing channels and hence both energy measurements have to be projective.

For processes which connect degenerate Hamiltonians, we were not able to extract a necessary condition on the channels following the projective measurements within the universality requirements (a) -- (c) of Section \ref{results}.
From a practical point of view, it would be quite inconvenient if different measurement apparatuses had to be employed depending on whether the energy spectrum contains degeneracies, a detail that may not be known a priori. Therefore, the restriction to universal channels seems quite natural. With such a restriction, we have extended the condition on the Crooks relation derived for systems with non-degenerate energy spectra to the degenerate case. Nevertheless, it would be interesting to explore if non-universal channels could also lead to measurement operations that are distinct from the form (\ref{phitnu}) but still allow the Crooks relation to be satisfied. To reach this goal, we would need an analog of Lemma 4 to deal with Hamiltonians that have a degenerate spectrum. This presents a technically difficult open problem.

It is interesting to note that the compatibility of a quantum work with fluctuation theorems requires a backaction of the energy measurements which makes it impossible to satisfy condition (i) of the Perarnau-Llobet {\it et al.} no-go theorem \cite{Perarnau-Llobet}: Any coherence with respect to the energy basis of a non-equilibrium initial state is erased by the first error-free measurement. That is, if we apply a two-energy measurement scheme compatible with fluctuation theorems to a non-thermal initial state,
the obtained average work differs from the difference in the average energy between the final and the initial states in general.

Finally,
the requirement of error-free measurements appearing in the conditions for the validity of fluctuation theorems might
put a severe restriction on the experimental realization and the practical usefulness of the fluctuation relations for quantum mechanical systems.
However, the necessity of our conditions for the validity of fluctuation theorems
suggests a potential application of fluctuation theorems to a diagnostic verification of the accuracy of quantum measurements in a similar spirit to \cite{Gardas}.
Moreover, other measurement strategies that can yield erroneous results may still be of practical use as long as the errors are superimposed on the sought-for work distribution in an a priori known way that makes a correction possible after the measurement. This strategy can be employed in particular for Gaussian energy measurements \cite{WVT} and Gaussian work meters~\cite{TH2016,Cerisola}.

 \begin{acknowledgements}
 We thank Akira Shimizu for helpful discussions and comments in the early stage of this work.
 We were supported by NSF of China (Grant No. 11674283), by the Zhejiang Provincial Natural Science Foundation Key Project (Grant No. LZ19A050001), by the Fundamental Research Funds for the Central Universities (Grants No. 2017QNA3005 and No. 2018QNA3004), by the Zhejiang University 100 Plan, and by the Thousand Young Talents Program of China. B.P.V. is supported by the Research Initiation Grant and Excellence-in-Research Fellowship of IIT Gandhinagar.
 \end{acknowledgements}

  \appendix
 \section{Error-free instruments}\label{errorfree}
From the condition $p(m|n) =\delta_{m,n}$ for an instrument to be error-free it follows with Eq.~(\ref{pmn}) that $\Tr \phi_m(\Pi_n) = \Tr \Pi_n \phi^*_m(\mathbb{1}) =0$ for all $n\neq m$, or, equivalently $\langle \psi |\phi^*_m(\mathbb{1})| \psi \rangle =0$ for all $\psi \in (\mathbb{1} - \Pi_m) \mathcal{H}$. Defining $|\varphi\rangle = |\psi \rangle + z |v \rangle$ with an arbitrary $|v \rangle \in \mathcal{H}$ and arbitrary $z \in \mathbb{C}$, we obtain the following inequality from the positivity of $\phi^*_m(\mathbb{1})$:
 \begin{equation}
\begin{split}
\langle \varphi|\phi^*_m(\mathbb{1})|\varphi \rangle & = \langle v |\phi^*_m(\mathbb{1})|v \rangle |z|^2 \\
&\quad + \langle \psi |\phi^*_m(\mathbb{1})|v \rangle z  + \langle v |\phi^*_m(\mathbb{1})|\psi \rangle \bar{z}\\
&\geq 0
\end{split}
 \label{cerf}
\end{equation} 
where $\bar{z}$ denotes the complex conjugate of $z$. Because this inequality must hold for all $z\in \mathbb{C}$, $\langle v, |\phi^*_m(\mathbb{1})|\psi\rangle =0$ must hold for all $|v \rangle \in \mathcal{H}$ and all $|\psi \rangle \in (\mathbb{1} - \Pi_m) \mathcal{H}$ yielding $\phi^*_m(\mathbb{1}) (\mathbb{1} - \Pi_m) =0$. Multiplying from the right by $\Pi_n$ one obtains $\phi^*_m(\mathbb{1}) \Pi_n = \delta_{m,n} \phi^*_n(\mathbb{1})$ and thus $\phi^*_m(\mathbb{1}) = \Pi_m$ follows in combination with the completeness relation $\sum_{m}\phi^{*}_m(\mathbb{1}) = \mathbb{1}$. Multiplying both sides with an arbitrary $\rho \in TC(\mathcal{H})$  and taking the trace, one obtains $\Tr \phi_m(\rho) = \Tr \Pi_m \rho$.
Consequently the operation $\phi_m$ consists of the subsequent application of the projective measurement and a completely positive trace preserving operation $\mathcal{E}_m$ \cite{Hayashi} in accordance with Eq.~\eqref{ef}.
  
\section{Key lemmas}\label{PoKL}
Here we present the technical background which is needed to derive the necessary conditions for the  Jarzynski equality and the Crooks relation. 
We begin with three lemmas on which the necessary conditions (JI) and (JII) for the Jarzynski equality are based.

In the following, we consider separable Hilbert spaces, which include both finite and infinite-dimensional spaces.
Both cases are treated in parallel as far as the dimensionality does not matter. In the sequel, only bounded operators are considered presenting no restriction for finite dimensional Hilbert spaces.

\subsection{Jarzynski equality}    

   \begin{lemma}\label{Lem_nee}
    Let $A$ and $B$ be bounded operators on a separable Hilbert space $\mathcal{H}$.
   Suppose that $A$ is not proportional to the identity $\mathbb{1}$.
   Then, any vector $\ket{v}\in\mathcal{H}$ which is not an eigenvector    
   of $A$ is an eigenvector of $B$ if and only if $B$ is proportional to  
   $\mathbb{1}$.
  \end{lemma}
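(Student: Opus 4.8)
The plan is to prove the two implications separately, with the reverse one being immediate and the forward one carrying all the content. For the reverse direction, if $B=c\mathbb{1}$ for some scalar $c$, then every vector of $\mathcal{H}$ is an eigenvector of $B$, so in particular every vector that fails to be an eigenvector of $A$ is trivially an eigenvector of $B$. Nothing about $A$ is used here beyond the hypothesis that it is not proportional to $\mathbb{1}$, whose only role is to guarantee that the premise ``not an eigenvector of $A$'' can actually be met, so that the statement is not vacuous.

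For the forward direction I would argue by contraposition and reduce the claim to a single existence statement: \emph{if neither $A$ nor $B$ is a scalar multiple of $\mathbb{1}$, then $\mathcal{H}$ contains a vector that is simultaneously a non-eigenvector of $A$ and a non-eigenvector of $B$.} Granting this, suppose the hypothesis of the lemma holds but $B$ is not a scalar multiple of $\mathbb{1}$; the existence statement produces a $\ket{v}$ that is not an eigenvector of $A$ yet not an eigenvector of $B$, contradicting the assumption that every non-eigenvector of $A$ is an eigenvector of $B$. Two preliminary facts are needed. First, the standard fact that an operator all of whose nonzero vectors are eigenvectors must be a scalar multiple of $\mathbb{1}$ (apply the operator to $\ket{x}$, $\ket{y}$ and $\ket{x}+\ket{y}$ for independent $\ket{x},\ket{y}$ and compare eigenvalues); its contrapositive yields, for each non-scalar operator, an honest non-eigenvector. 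Second, the criterion that two vectors are linearly dependent exactly when their exterior product in $\mathcal{H}\wedge\mathcal{H}$ vanishes, which converts ``being a non-eigenvector'' into the non-vanishing of a continuous quantity that is polynomial in a scalar parameter.

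To establish the existence statement I would pick $\ket{u}$ with $\ket{u}\wedge A\ket{u}\neq 0$ and $\ket{w}$ with $\ket{w}\wedge B\ket{w}\neq 0$. If $\ket{u}$ also fails to be an eigenvector of $B$, or $\ket{w}$ also fails to be an eigenvector of $A$, we are done; otherwise $B\ket{u}=\beta\ket{u}$ and $A\ket{w}=\alpha\ket{w}$, and $\ket{u},\ket{w}$ are necessarily linearly independent, since a scalar multiple of the $B$-eigenvector $\ket{u}$ cannot be the $B$-non-eigenvector $\ket{w}$. I would then probe the family $\ket{v_t}=\ket{u}+t\ket{w}$. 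A short computation shows that $\ket{v_t}\wedge A\ket{v_t}$ is affine in $t$ with nonzero constant term $\ket{u}\wedge A\ket{u}$, hence vanishes for at most one $t$, whereas $\ket{v_t}\wedge B\ket{v_t}$ is quadratic in $t$ with nonzero leading coefficient $\ket{w}\wedge B\ket{w}$ (and an overall factor $t$), hence vanishes for at most two values of $t$. Thus at most three parameter values are ``bad,'' and since the scalar field is infinite some $t$ makes $\ket{v_t}$ a common non-eigenvector; moreover $\ket{v_t}\neq 0$ for all $t$ because $\ket{u},\ket{w}$ are independent.

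The main obstacle lies entirely in the infinite-dimensional, non-normal setting: one cannot invoke spectral decompositions or finite-dimensional genericity, and such operators may even carry an uncountable point spectrum, so the set of eigenvectors can be large. The device that circumvents this is the exterior-product reformulation, which reduces ``a common non-eigenvector exists'' to counting the zeros of low-degree vector-valued polynomials in the single parameter $t$; the only care required is to check that the constant term (for $A$) and the leading term (for $B$) are genuinely nonzero, which is precisely what the initial choices of $\ket{u}$ and $\ket{w}$ secure. This also keeps the argument uniform across finite and separable infinite dimensions, as the appendix demands.
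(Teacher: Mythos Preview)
Your argument is correct and takes a genuinely different route from the paper. The paper proves the forward direction directly: it first disposes of the case where $A$ has empty point spectrum, then fixes an eigenvector $\ket{a}$ of $A$ and a non-eigenvector $\ket{v}$, shows that $\ket{a}+\epsilon\ket{v}$ can be an eigenvector of $A$ for at most one value of $\epsilon$, and uses this to produce a sequence of $A$-non-eigenvectors converging to $\ket{a}$; since each term of the sequence is by hypothesis a $B$-eigenvector, continuity of $B$ forces $\ket{a}$ to be a $B$-eigenvector as well, whence every vector is a $B$-eigenvector and $B$ is scalar.

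Your proof proceeds instead by contraposition and is purely algebraic: you look for a single common non-eigenvector along the pencil $\ket{u}+t\ket{w}$, recasting ``eigenvector'' as the vanishing of a wedge and then counting roots of the resulting vector-valued polynomials in $t$. Both arguments share the device of probing a one-parameter family and bounding the number of ``bad'' parameter values, but you run that device for $A$ and $B$ simultaneously, whereas the paper runs it only for $A$ and then invokes a limit. What your approach buys is uniformity (no case split on whether $A$ has point spectrum) and the avoidance of any topological or continuity argument; what the paper's approach buys is that it never leaves the Hilbert space itself, so one need not introduce the auxiliary space $\mathcal{H}\wedge\mathcal{H}$ or justify the root-counting for vector-valued polynomials. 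Either way the core observation is the same: along a generic line, being an eigenvector is an exceptional condition.
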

   \begin{proof}
    We prove the ``only if'' part since the ``if'' part is trivial.
    If $\mathcal{H}$ is infinite-dimensional, the bounded operator $A$ may have no eigenvalues in the sense that any $\lambda$ satisfies $\mathrm{Ker} (\lambda \mathbb{1} - A) = \{0\}$ even if $\lambda \mathbb{1} - A$ is not invertible; i.e., $\lambda$ belongs to the spectrum of the operator $A$ that has an empty point-spectrum \cite{Yosida}.
    In this case, $A$ does not possess any eigenvector $| v \rangle \in \mathcal{H}$ and hence, any vector is an eigenvector of $B$, according to the assumption.
   Thus, $B$ is proportional to $\mathbb{1}$.
   
Now suppose that $A$ has an eigenvector $\ket{a}$ with corresponding  
    eigenvalue $a$.
    In the following, we parallelly deal with both finite and infinite-dimensional Hilbert spaces since the proof is similar regardless of its dimensionality.
    
   Since $A$ is not proportional to $\mathbb{1}$, there exists a vector  
    $\ket{v}\neq 0$ which is not an eigenvector of $A$.
    %mark_change the color
    Let us construct a sequence $\epsilon_n \rightarrow 0$ such that $\ket{a} + \epsilon_n \ket{v}$ is not an eigenvector of $A$.
    To do so, we show that there is at most a single value of $\epsilon$  that can make $\ket{a} + \epsilon \ket{v}$ an eigenvector of $A$.
    Suppose that there exists a number $\epsilon > 0$ such that
    $\ket{a}+\epsilon\ket{v}$ is an eigenvector of $A$ with its eigenvalue $\lambda$.
    Then, $A(\ket{a}+\epsilon\ket{v}) = \lambda (\ket{a}+\epsilon\ket{v})$ and $A\ket{a} = a \ket{a}$ read
    $
     (a-\lambda)\ket{a} + \epsilon (A -\lambda\mathbb{1})\ket{v} = 0,
    $
    which implies
%    implies that $\ket{a}$ and $(A-\lambda\mathbb{1})\ket{v}$ are linearly {\it dependent} since $\epsilon \neq 0$.
%    Hence, there exists $c_0 \neq 0$ such that
    \begin{align}
     (A-\lambda\mathbb{1})\ket{v} = c_0 \ket{a} \label{LDep2}
    \end{align}
    with
    \begin{align}
     c_0 = \frac{(\lambda - a)}{\epsilon}.\label{c_0Def}
    \end{align}
    %because of $a\neq \lambda$ which follows from the fact that $\ket{v}$ is not an eigenvector of $A$.
    Equation \eqref{LDep2} further reads
    \begin{align}
     A\ket{v} = \lambda \ket{v} + c_0 \ket{a}.\label{LvL0a}
    \end{align}
    Hence, $A\ket{v}$ is in the subspace spanned by two linearly independent vectors $\ket{v}$ and $\ket{a}$.
    Since such a decomposition $A\ket{v} = c_v \ket{v} + c_a \ket{a}$ is unique,
    $\lambda$ and $c_0$ are uniquely identified as $\lambda = c_v$ and $c_0 = c_a$.
%    Since this is a decomposition of a vector $A\ket{v}$ into two  $\ket{v}$ and $\ket{a}$, the coefficients $\lambda$ and $c_0$ are unique.
%    Since    $\lambda$ and $\lambda_0$ are the unique numbers with which \eqref{LvL0a} holds.
    %
    Noting that $A\ket{v}$, $\ket{v}$, and $\ket{a}$ are all independent of $\epsilon$, we can conclude that the eigenvalue $\lambda$ with the eigenvector of the form $\ket{a}+\epsilon\ket{v}$ is unique.
    Therefore, from equation \eqref{c_0Def} and the fact that $c_0=c_a$ is unique, at most only one number $\epsilon = (c_v - a)/c_a$ can make $\ket{a} + \epsilon \ket{v}$ an eigenvector of $A$.
    %
%    Therefore, there exists at most one number $\epsilon = (\lambda - a)/\lambda_0$ such that $\ket{a} + \epsilon\ket{v}$ is an eigenvector of $A$.
    Thus, taking a sequence $\epsilon_n$ $(n\geq 1)$:
    \begin{align}
     \epsilon_n :=
     \left \{
     \begin{array}{ll}
      \frac{\epsilon}{n+1} \quad (\exists \epsilon>0, \ket{a} + \epsilon \ket{v} \text{ is an eigenvector of } A) \\
      \frac{1}{n} \quad \mathrm{(otherwise)},
     \end{array}
     \right.
    \end{align}    
we have a sequence $\ket{v_n}\rightarrow \ket{a}$ of vectors:
   \begin{align}
    \ket{v_n}:= \ket{a} + \epsilon_n \ket{v},
   \end{align}
   each of which is not an eigenvector of $A$.
   Because $\ket{v_n}$ is not an eigenvector of $A$, the relation $B\ket{v_n} = b_n\ket{v_n}$ holds with some number $b_n$ by the assumption.
    We have
    \begin{align}
     B\ket{a} = \lim_{n\rightarrow\infty} B\ket{v_n}
     = \lim_{n\rightarrow\infty} b_n\ket{v_n}.
    \end{align}
    Because $(b_n\ket{v_n})$ converges to a vector $\ket{w}=B\ket{a}$, $(b_n)$ is bounded, say $|b_n| \leq b$.
    Thus, for any vector $\ket{u}\perp \ket{a}$, we have
    \begin{align}
     |\braket{u}{w}| =& \lim_{n\rightarrow\infty}|b_n\bra{u}\ket{v_n}|\nonumber\\
     \leq& b \lim_{n\rightarrow\infty}|\braket{u}{v_n}|
     = |\braket{u}{a}| = 0.
    \end{align}
    Thus, $\ket{w}=c\ket{a}$ with some number $c$, which means that
    \begin{align}
     B\ket{a} = \ket{w} = c\ket{a}.
    \end{align}
    Therefore, any vector of the Hilbert space $\mathcal{H}$ is an eigenvector of $B$, which is only possible if $B$ is proportional to $\mathbb{1}$.
   \end{proof}
  
   \begin{lemma}\label{Lem_zero}
   Let $A$ and $B$ be bounded operators on a separable Hilbert space $\mathcal{H}$.
   If neither $A$ nor $B$ is proportional to the identity $\mathbb{1}$, then there exists an orthonormal system $\{\ket{e_1},\ket{e_2}\}\subset \mathcal{H}$ such that
   $\bra{e_2}A\ket{e_1}\neq 0$ and $\bra{e_2}B\ket{e_1}\neq 0$ hold.
  \end{lemma}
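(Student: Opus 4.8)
The plan is to reduce the problem to producing a single unit vector $\ket{e_1}$ that is simultaneously \emph{not} an eigenvector of $A$ and \emph{not} an eigenvector of $B$, and then to build $\ket{e_2}$ by an elementary argument inside the orthogonal complement of $\ket{e_1}$. The key observation is that Lemma~\ref{Lem_nee} already hands us such a common non-eigenvector: since $B$ is not proportional to $\mathbb{1}$, the biconditional there (read in the contrapositive) shows it cannot be true that every vector failing to be an eigenvector of $A$ is an eigenvector of $B$. Hence there is a vector $\ket{v}\neq 0$ that is an eigenvector of neither $A$ nor $B$; after normalization I set $\ket{e_1}=\ket{v}/\sqrt{\braket{v}{v}}$.

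Next I decompose the images of $\ket{e_1}$ relative to $\ket{e_1}$ itself, writing
\begin{align}
 A\ket{e_1} = c_A \ket{e_1} + \ket{a_\perp}, \qquad B\ket{e_1} = c_B \ket{e_1} + \ket{b_\perp},
\end{align}
where $\ket{a_\perp},\ket{b_\perp}\perp\ket{e_1}$. Because $\ket{e_1}$ is an eigenvector of neither operator, both perpendicular components are nonzero. Moreover, for any $\ket{e_2}\perp\ket{e_1}$ the diagonal terms drop out, giving $\bra{e_2}A\ket{e_1}=\braket{e_2}{a_\perp}$ and $\bra{e_2}B\ket{e_1}=\braket{e_2}{b_\perp}$. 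Thus the whole statement reduces to finding a unit vector $\ket{e_2}$, orthogonal to $\ket{e_1}$, that is not orthogonal to either $\ket{a_\perp}$ or $\ket{b_\perp}$.

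Such a vector always exists, and I would exhibit it directly. If $\braket{a_\perp}{b_\perp}\neq 0$, take $\ket{e_2}$ proportional to $\ket{a_\perp}$, so that $\braket{e_2}{a_\perp}\neq 0$ and $\braket{e_2}{b_\perp}$ is a nonzero multiple of $\braket{a_\perp}{b_\perp}\neq 0$. If instead $\ket{a_\perp}$ and $\ket{b_\perp}$ are orthogonal, take $\ket{e_2}$ proportional to $\ket{a_\perp}+\ket{b_\perp}$, so that $\braket{e_2}{a_\perp}$ and $\braket{e_2}{b_\perp}$ are nonzero multiples of $\braket{a_\perp}{a_\perp}>0$ and $\braket{b_\perp}{b_\perp}>0$, respectively. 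In both cases $\ket{e_2}$ lies in the span of $\ket{a_\perp},\ket{b_\perp}$ and is therefore orthogonal to $\ket{e_1}$; normalizing it yields the desired orthonormal system $\{\ket{e_1},\ket{e_2}\}$ with $\bra{e_2}A\ket{e_1}\neq 0$ and $\bra{e_2}B\ket{e_1}\neq 0$.

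I do not expect a genuine obstacle here; the only nontrivial move is the very first one, recognizing that the appropriate reading of Lemma~\ref{Lem_nee} supplies a \emph{common} non-eigenvector. Everything afterward is finite-dimensional linear algebra that takes place entirely within the span of the three vectors $\ket{e_1},\ket{a_\perp},\ket{b_\perp}$, so the separable infinite-dimensional case requires no extra care: boundedness of $A$ and $B$ enters only through the fact that $A\ket{e_1}$ and $B\ket{e_1}$ are well-defined elements of $\mathcal{H}$.
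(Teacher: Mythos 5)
Your proof is correct. The first and only essential step coincides with the paper's: both invoke Lemma~\ref{Lem_nee} (in contrapositive) to obtain a single unit vector $\ket{e_1}$ that is an eigenvector of neither $A$ nor $B$. Where you diverge is in the construction of $\ket{e_2}$, and your route is cleaner. The paper works inside $\mathcal{V}=\mathrm{span}\{\ket{e_1},A\ket{e_1},B\ket{e_1}\}$, picks a trial vector $\ket{\tilde e_2}$, and when one of the two matrix elements vanishes it repairs the situation by introducing an auxiliary vector $\ket{e_3}$, a projector $\Pi=\ketbra{\tilde e_2}+\ketbra{e_3}$, and a real parameter $x$ chosen to avoid two forbidden values in $\ket{e_2}=c(\ket{e_3}+x\Pi B\ket{e_1})$. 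Your decomposition $A\ket{e_1}=c_A\ket{e_1}+\ket{a_\perp}$, $B\ket{e_1}=c_B\ket{e_1}+\ket{b_\perp}$ makes the target transparent --- one needs $\ket{e_2}\perp\ket{e_1}$ with $\braket{e_2}{a_\perp}\neq 0$ and $\braket{e_2}{b_\perp}\neq 0$ for two nonzero vectors $\ket{a_\perp},\ket{b_\perp}$ --- and the two-case choice ($\ket{e_2}\propto\ket{a_\perp}$ if $\braket{a_\perp}{b_\perp}\neq 0$, else $\ket{e_2}\propto\ket{a_\perp}+\ket{b_\perp}$) settles it with no parameters to tune. Both arguments live in a subspace of dimension at most three, so neither needs extra care in the infinite-dimensional case; your version simply buys brevity and avoids the case-splitting bookkeeping of the original.
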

  \begin{proof}
   Because neither $A$ nor $B$ is proportional to $\mathbb{1}$, there exists a unit vector $\ket{e_1}$ which is not an eigenvector of $A$ nor of $B$ from Lemma \ref{Lem_nee}.
   Thus, the subspace $\mathcal{V}:=\mathrm{span}\{\ket{e_1},A\ket{e_1},B\ket{e_1}\}$ is not one dimensional.
   We take a unit vector $\ket{\tilde{e}_2}\in\mathcal{V}$ which is orthogonal to $\ket{e_1}$.
   Let us show that we can always construct the desired vector $\ket{e_2}$.
   If $\bra{\tilde{e}_2}A\ket{e_1}\neq 0$ and $\bra{\tilde{e}_2}B\ket{e_1}\neq 0$, $\{\ket{e_1},\ket{\tilde{e}_2}\}$ is the desired orthonormal system.
   If this is not the case, we have $\bra{\tilde{e}_2}A\ket{e_1} = 0$ without loss of generality.
   In this case, since $A\ket{e_1}$ is not proportional to $\ket{e_1}$, there exists another unit vector $\ket{e_3}\perp \ket{e_1}$ which is composed of a linear combination of $\ket{e_1}$ and $A\ket{e_1}$.
   Then, $\{\ket{e_1},\ket{\tilde{e}_2},\ket{e_3}\}$ is a completely orthonormal system (CONS) of $\mathcal{V}$.
   We define a projection $\Pi := \ketbra{\tilde{e}_2} + \ketbra{e_3}$.
   Note that $\Pi B\ket{e_1}\neq 0$ since $B\ket{e_1}$ is not proportional to $\ket{e_1}$.
   Then, taking a real number $x$ which satisfies
   \begin{align}
    x\neq -\frac{\bra{e_3}B\ket{e_1}}{\|\Pi B\ket{e_1}\|^2},
    -\frac{1}{\bra{e_1}B\ket{e_3}},
   \end{align}
   we define a unit vector as follows:
   \begin{align}
    \ket{e_2} := c (\ket{e_3} + x \Pi B \ket{e_1}),
   \end{align}
   where $c$ is the normalization factor.
   Because $\ket{e_2}\in \Pi \mathcal{V}$, we have $\ket{e_2}\perp\ket{e_1}$.
   We have
   \begin{align}
    \bra{e_2}A\ket{e_1}
    = c \bra{e_3}A\ket{e_1}(1 + x\bra{e_1}B\ket{e_3}) \neq 0
   \end{align}
   and
   \begin{align}
    \bra{e_2}B\ket{e_1}
    = c (\bra{e_3}B\ket{e_1} + x\|\Pi B \ket{e_1}\|^2) \neq 0
   \end{align}
   by the definition of $x$.
   Therefore $\{\ket{e_1},\ket{e_2}\}$ is a desired orthonormal system.
  \end{proof}
  
  \begin{lemma}\label{Lem_Jarzynski}
   Let $A$ and $B$ be \sout{a} self-adjoint bounded operators, and $U$ an arbitrary unitary operator on a separable Hilbert space $\mathcal{H}$.
   Then, the relation  
   \begin{align}
    \tr U^{\dagger}AU B = r \quad (\forall U\mathrm{: unitary})\label{LJ_0}
   \end{align}
   holds with an $U$-independent number $r\neq 0$, if and only if
   $A=a \mathbb{1}$ and $\tr B = r/a$ with a real number $a\neq 0$ (or $B=a \mathbb{1}$ and $\tr A = r/a$ by the symmetry between $A$ and $B$).
   For $r=0$, equation (\ref{LJ_0}) holds if and only if either the above holds, or $A$ or $B$ equals to $0$.
  \end{lemma}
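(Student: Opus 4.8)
The plan is to prove the two directions separately, with essentially all of the content living in the ``only if'' direction, which I would attack by contraposition using Lemma~\ref{Lem_zero}. Throughout I take the trace in a fixed orthonormal basis and assume, as the statement implicitly does, that $\Tr U^\dagger A U B$ is a well-defined finite number for every $U$.

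The ``if'' direction is immediate. If $A=a\mathbb{1}$ with $a$ real, then $U^\dagger A U = a\mathbb{1}$ for every unitary $U$, so $\Tr U^\dagger A U B = a\,\Tr B$, which equals $r$ exactly when $\Tr B = r/a$; the case $B=a\mathbb{1}$ is handled the same way, the roles of $A$ and $B$ being interchangeable because the hypothesis is imposed for all $U$ (in particular for $U^\dagger$, and $\Tr U A U^\dagger B = \Tr U^\dagger B U A$ by cyclicity). If $A=0$ or $B=0$ the trace vanishes identically, matching $r=0$.

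For ``only if'' I would assume, toward a contradiction, that neither $A$ nor $B$ is proportional to $\mathbb{1}$. Lemma~\ref{Lem_zero} then supplies an orthonormal pair $\{\ket{e_1},\ket{e_2}\}$ with $A_{21}:=\bra{e_2}A\ket{e_1}\neq 0$ and $B_{21}:=\bra{e_2}B\ket{e_1}\neq 0$; self-adjointness gives in addition $B_{12}=\overline{B_{21}}\neq 0$. Extending to an orthonormal basis, I would probe the hypothesis with the phase unitaries $U_\psi = e^{i\psi}\ketbra{e_1}+e^{-i\psi}\ketbra{e_2}+(\mathbb{1}-\ketbra{e_1}-\ketbra{e_2})$, for which $\Tr U_\psi^\dagger A U_\psi B = \sum_{j,k} e^{i(\phi_k-\phi_j)}A_{jk}B_{kj}$ with $\phi_1=\psi$, $\phi_2=-\psi$, and $\phi_j=0$ for $j\geq 3$. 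The frequencies in $\psi$ then lie in $\{-2,-1,0,1,2\}$, and the only pair contributing to frequency $+2$ is $(j,k)=(2,1)$, with coefficient $A_{21}B_{12}$. Since the hypothesis asserts this function is the constant $r$, whose frequency-$2$ component is zero, extracting that component, either by integrating against $e^{-2i\psi}$ over a period or, to stay manifestly finite, by a five-point discrete Fourier sum over the $\Tr U_{\psi_l}^\dagger A U_{\psi_l} B$, forces $A_{21}B_{12}=0$, contradicting $A_{21}\neq0$ and $B_{12}\neq0$. Hence at least one of $A,B$ is a multiple of $\mathbb{1}$, and the ``if''-direction computation fixes the trace condition, separating $r\neq0$ (the scalar must be nonzero, and the other operator is trace class with $\Tr=r/a$) from $r=0$ (either the scalar vanishes, so $A=0$ or $B=0$, or the other operator is traceless).

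The main obstacle I anticipate is not the algebra but the infinite-dimensional bookkeeping: I must be sure the frequency-$2$ coefficient is genuinely isolated as $A_{21}B_{12}$ and not contaminated by the infinitely many cross terms $A_{jk}B_{kj}$ with $j$ or $k\geq 3$. The symmetric choice $\phi_1=-\phi_2=\psi$, rather than a single phase on $\ket{e_1}$, is precisely what relegates every such cross term to frequencies $0,\pm1$ and leaves frequency $\pm2$ populated only by the $(2,1)$ and $(1,2)$ terms; the finite discrete Fourier extraction then combines only finitely many of the traces, each finite by hypothesis, sidestepping any convergence question in the defining sum. This is also where Lemma~\ref{Lem_nee} and Lemma~\ref{Lem_zero} carry the weight, as they guarantee uniformly in the dimension, including the delicate case in which $A$ has empty point spectrum, that a pair $\{\ket{e_1},\ket{e_2}\}$ with both off-diagonal matrix elements nonzero can be found.
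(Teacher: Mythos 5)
Your proof is correct and rests on the same two pillars as the paper's own argument: Lemma~\ref{Lem_zero} and diagonal phase unitaries whose Fourier components in the phase variable isolate the off-diagonal products $\bra{e_j}A\ket{e_k}\bra{e_k}B\ket{e_j}$. The only real difference is organizational. The paper first proves $\bra{f_k}A\ket{f_l}\bra{f_l}B\ket{f_k}=0$ for every pair in \emph{every} CONS, using $U_x=\sum_k e^{ix\phi_k}\ketbra{f_k}$ with all phase differences distinct and invoking linear independence of the (possibly infinitely many) functions $e^{-ix(\phi_k-\phi_l)}$, and only then applies Lemma~\ref{Lem_zero} to conclude that $A$ or $B$ is a multiple of $\mathbb{1}$. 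You run it in the opposite order: Lemma~\ref{Lem_zero} first supplies one orthonormal pair with both off-diagonal elements nonzero, and a two-mode phase unitary then yields a trigonometric polynomial with only five frequencies whose frequency-$2$ coefficient is exactly the single product $A_{21}B_{12}$ you need to annihilate. Your variant buys a cleaner infinite-dimensional argument: extracting one coefficient of a finite trigonometric polynomial (equivalently, a five-point discrete Fourier sum over traces each equal to $r$) sidesteps the rearrangement and term-by-term-vanishing issues implicit in the paper's identity between a constant and an infinite sum of exponentials. Your closing bookkeeping for the $r\neq 0$ versus $r=0$ cases matches the statement.
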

  \begin{proof}
   The ``if'' part is obvious. We prove the ``only if'' part, i.e., we suppose that \eqref{LJ_0} holds.
   For an arbitrary CONS $\{\ket{f_k}|\}$ of $\mathcal{H}$, we take a unitary operator $U_x$ depending on a real number $x$ as follows:
   \begin{align}
    U_x := \sum_{k} e^{i x \phi_k} \ketbra{f_k},
   \end{align}
   where $\phi_k$ $(k=1,2,\cdots)$ are real numbers such that
   $\phi_{k_1} - \phi_{l_1} \neq \phi_{k_2} - \phi_{l_2}$ holds for any $k_1, k_2, l_1, l_2$ with $k_1 \neq l_1$, $(k_1,l_1)\neq (k_2,l_2)$.
   Then, the condition \eqref{LJ_0} reads
   \begin{align}
    \tr U_x^{\dagger} A U_x B
    = \sum_{k,l} e^{-ix (\phi_k - \phi_l)}\bra{f_k}A\ket{f_l}\bra{f_l}B\ket{f_k}
    = r.\label{lin_ind}
   \end{align}
   Since the functions $e^{-ix(\phi_k- \phi_l)}$ of $x$ and the identity function are linearly independent because of the definition of $\phi_k$,
   equation \eqref{lin_ind} holds for any $x$ if and only if
   \begin{align}
    \bra{f_k}A\ket{f_l}\bra{f_l}B\ket{f_k} =& 0 \quad (k\neq l)\label{klkl}\\
    \sum_{k}\bra{f_k}A\ket{f_k}\bra{f_k}B\ket{f_k} - r =& 0
   \end{align}
   hold.
   By applying Lemma \ref{Lem_zero}, $A$ or $B$ must be proportional to the identity $\mathbb{1}$ because equation \eqref{klkl} holds for any CONS $\{\ket{f_k}\}$ of $\mathcal{H}$.
   Otherwise, there exists an orthonormal system $\{\ket{f_1},\ket{f_2}\}$ such that $\bra{f_1}A\ket{f_2}\bra{f_2}B\ket{f_1}\neq 0$ from Lemma \ref{Lem_zero}.
   Thus, the proof is completed.
  \end{proof}

\subsection{Crooks relation}
  Next, we prove the key lemma for the necessary and sufficient condition for the Crooks fluctuation relation.
  
  \begin{lemma}\label{Lem_Crooks}
   Let $\rho$ and $\sigma$ be density matrices on a separable Hilbert space $\mathcal{H}$. Let $\ket{a}$ and $\ket{b}$ be pure states in this Hilbert space.
   For $D$-dimensional $\mathcal{H}$,
   \begin{align}
    \bra{a}U^{\dagger}\rho U\ket{a} = \bra{b}U\sigma U^{\dagger}\ket{b} \label{nd1}
   \end{align}
   holds for any unitary matrix $U$ if and only if there exists $\alpha$ with $-\frac{1}{D-1}\leq \alpha \leq 1$ such that
   \begin{align}
    (\rho,\sigma)
    = \left((1-\alpha)\frac{1}{D}\mathbb{1} + \alpha \ketbra{b}, (1-\alpha)\frac{1}{D}\mathbb{1} + \alpha \ketbra{a}\right),\label{rs}
   \end{align}
   where $\mathbb{1}$ is the identity.

For a separable infinite-dimensional Hilbert space $\mathcal{H}$
equation \eqref{nd1} holds for any unitary matrix $U$ if and only if $\rho = \ketbra{b}$ and $\sigma = \ketbra{a}$.
  \end{lemma}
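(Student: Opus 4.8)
The plan is to convert the ``for all $U$'' condition into a pointwise constraint on the vectors $U\ket{a}$ and $U^{\dagger}\ket{b}$, and then to read off the structure of $\rho$ and $\sigma$. Writing $\ket{\psi}=U\ket{a}$ and $\ket{\phi}=U^{\dagger}\ket{b}$, equation \eqref{nd1} reads $\bra{\psi}\rho\ket{\psi}=\bra{\phi}\sigma\ket{\phi}$, and these vectors are automatically linked by $\braket{\psi}{b}=\bra{a}U^{\dagger}\ket{b}=\braket{a}{\phi}$. The crucial converse is that, given any two unit vectors with $\braket{\psi}{b}=\braket{a}{\phi}$, there is a unitary $U$ with $U\ket{a}=\ket{\psi}$ and $U\ket{\phi}=\ket{b}$: the ordered pairs $(\ket{a},\ket{\phi})$ and $(\ket{\psi},\ket{b})$ share the same Gram matrix, so they are intertwined by a partial isometry, which extends to a unitary because the orthogonal complements have equal dimension. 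Hence \eqref{nd1} holds for every $U$ if and only if $\bra{\psi}\rho\ket{\psi}=\bra{\phi}\sigma\ket{\phi}$ for all unit vectors obeying $\braket{\psi}{b}=\braket{a}{\phi}$.

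Next I would fix the common value $c:=\braket{\psi}{b}=\braket{a}{\phi}$ and note that $\ket{\psi}$ and $\ket{\phi}$ then range over their respective fibers independently. Freezing $\ket{\phi}$ and varying $\ket{\psi}$ forces $\bra{\psi}\rho\ket{\psi}$ to be constant on each fiber $\{\braket{\psi}{b}=c\}$, with common value $h(c)$, and symmetrically $\bra{\phi}\sigma\ket{\phi}=h(c)$. Sending $\ket{\psi}\mapsto e^{i\theta}\ket{\psi}$ shows $h(c)$ depends only on $|c|^{2}$. Introducing $\Pi_a:=\ketbra{a}$ and $\Pi_b:=\ketbra{b}$, this says $\bra{\psi}\rho\ket{\psi}$ is a function of $\bra{\psi}\Pi_b\ket{\psi}$ alone, and $\bra{\phi}\sigma\ket{\phi}$ is the same function of $\bra{\phi}\Pi_a\ket{\phi}$.

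From here I would pin down $\rho$ and $\sigma$. Taking $\ket{\psi}\perp\ket{b}$ gives $\bra{\psi}\rho\ket{\psi}=h(0)=:\nu$ for every such unit vector, hence $(\mathbb{1}-\Pi_b)\rho(\mathbb{1}-\Pi_b)=\nu(\mathbb{1}-\Pi_b)$ by polarization; taking $\ket{\psi}=\ket{b}$ gives $\bra{b}\rho\ket{b}=:\mu$; and inserting $\cos\gamma\,\ket{b}+\sin\gamma\,e^{i\delta}\ket{\psi_\perp}$ and demanding independence of $\delta$ kills the cross term $\bra{b}\rho\ket{\psi_\perp}=0$, so $\Pi_b\rho(\mathbb{1}-\Pi_b)=0$. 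Therefore $\rho=\mu\Pi_b+\nu(\mathbb{1}-\Pi_b)$, and by the symmetric argument $\sigma=\mu'\Pi_a+\nu'(\mathbb{1}-\Pi_a)$. Matching $\nu+(\mu-\nu)|c|^{2}$ against $\nu'+(\mu'-\nu')|c|^{2}$ for all $|c|\in[0,1]$ then forces $\nu=\nu'$ and $\mu=\mu'$.

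Finally I would impose $\Tr\rho=1$. Setting $\alpha:=\mu-\nu$ gives $\nu=(1-\alpha)/D$ and $\mu=(1-\alpha)/D+\alpha$, so $\rho=(1-\alpha)D^{-1}\mathbb{1}+\alpha\Pi_b$ and $\sigma=(1-\alpha)D^{-1}\mathbb{1}+\alpha\Pi_a$, which is \eqref{rs}; positivity of the eigenvalues $(1-\alpha)/D$ and $(1+(D-1)\alpha)/D$ gives precisely $-(D-1)^{-1}\le\alpha\le1$. In the separable infinite-dimensional case the same steps yield $\rho=\mu\Pi_b+\nu(\mathbb{1}-\Pi_b)$, but now finiteness of $\Tr\rho$ forces $\nu=0$ and then $\mu=1$, so $\rho=\ketbra{b}$ and $\sigma=\ketbra{a}$, i.e.\ only projective measurements survive. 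The ``if'' direction is immediate, since both sides of \eqref{nd1} equal $(1-\alpha)/D+\alpha|\bra{b}U\ket{a}|^{2}$. I expect the reformulation in the first step to be the main obstacle: one must verify the unitary-existence claim uniformly, including the degenerate configurations where $|c|=1$ and the two pairs become linearly dependent, and carry out the extension of the partial isometry to a full unitary in the infinite-dimensional setting.
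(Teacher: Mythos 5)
Your proof is correct, and it takes a genuinely different route from the paper's. The paper proves the ``only if'' direction by contradiction: it first disposes of the case $\sigma = r_1\mathbb{1} + r_2\ketbra{a}$, then for the remaining case works with the spectral decompositions $\rho=\sum_i p_i\ketbra{v_i}$, $\sigma=\sum_i q_i\ketbra{u_i}$, exploits the invariance of \eqref{nd1} under $\rho\mapsto\gamma\mathbb{1}-\rho$, $\sigma\mapsto\gamma\mathbb{1}-\sigma$, and constructs explicit unitaries adapted to the eigenbases (e.g.\ $U_1$ sending $\ket{a}$ to the top eigenvector $\ket{v_1}$) to force $\ketbra{v_1}=\ketbra{b}$, $p_1=q_1$, and finally a contradiction at the second eigenvalue. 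Your argument instead converts the quantifier over $U$ into a pointwise condition: $\bra{\psi}\rho\ket{\psi}=\bra{\phi}\sigma\ket{\phi}$ for all unit vectors with $\braket{\psi}{b}=\braket{a}{\phi}$, justified by extending the partial isometry $(\ket{a},\ket{\phi})\mapsto(\ket{\psi},\ket{b})$ to a unitary (the equal-Gram-matrix check is exactly the overlap constraint, and the degenerate case $|c|=1$ forces both pairs to be proportional simultaneously, so it causes no trouble). Decoupling the two fibers and averaging over the phase then shows $\bra{\psi}\rho\ket{\psi}$ is a function of $|\braket{\psi}{b}|^2$ alone, from which the block form $\rho=\mu\Pi_b+\nu(\mathbb{1}-\Pi_b)$ is read off directly, with normalization and positivity giving the range of $\alpha$ and trace-class-ness killing $\nu$ in infinite dimensions. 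Your version is direct rather than by contradiction, treats the finite- and infinite-dimensional cases uniformly, and makes the emergence of the depolarizing form transparent; the paper's version avoids the isometry-extension lemma and stays entirely within explicit eigenbasis manipulations, at the cost of more bookkeeping and a less illuminating endgame.
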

  
  \begin{proof}
   To begin with, we prove the ``if'' part.
   The ``if'' part for the infinite-dimensional case is obvious.
   We focus on $D$-dimensional $\mathcal{H}$.
   In fact, for $-\frac{1}{D-1}\leq \alpha \leq 1$, $(\rho,\sigma) = \left((1-\alpha)\frac{1}{D}\mathbb{1} + \alpha \ketbra{b}, (1-\alpha)\frac{1}{D}\mathbb{1} + \alpha \ketbra{a}\right)$ are states which satisfy
   \begin{align}
    &\bra{a}U^{\dagger}\rho U\ket{a}\nonumber\\
    =&\bra{a}U^{\dagger}\left((1-\alpha)\frac{1}{D}\mathbb{1}+\alpha\ketbra{b}\right) U\ket{a}\nonumber\\
    =&(1-\alpha)\frac{1}{D}\braket{a} + \alpha\bra{a}U^{\dagger}\ket{b}\bra{b}U\ket{a} \nonumber\\
    =& (1-\alpha)\frac{1}{D} + \alpha\bra{a}U^{\dagger}\ket{b}\bra{b}U\ket{a}\nonumber\\
    =&(1-\alpha)\frac{1}{D}\braket{b} + \alpha\bra{b}U\ket{a} \bra{a}U^{\dagger}\ket{b}\nonumber\\
    =&\bra{b}U\left((1-\alpha)\frac{1}{D}\mathbb{1} + \alpha\ketbra{a}\right) U^{\dagger}\ket{b}\nonumber\\
    =&\bra{b}U\sigma U^{\dagger}\ket{b}.
   \end{align}
   Hence, ``if'' part is completed.

   Now, we prove ``only if'' part. Hence, suppose that a pair of states $(\rho,\sigma)$ satisfies \eqref{nd1}.
   At first, we consider the case where $\sigma = r_1 \mathbb{1} + r_2 \ketbra{a}$ with real numbers $r_1$ and $r_2$.
   In this case, condition \eqref{nd1} reads
   \begin{align}
    \bra{a}U^{\dagger}\rho U\ket{a} =  \bra{a}U^{\dagger}(r_1\mathbb{1} + r_2 \ketbra{b}) U \ket{a}
    \; (\forall U : \mathrm{unitary}).\label{nd1_p}
   \end{align}
   Since for any pure state $\ket{\psi}\in \mathcal{H}$ there exists a unitary operator $U$ such that $\ket{\psi}=U\ket{a}$,
   condition \eqref{nd1_p} leads to
   \begin{align}
    \bra{\psi}(\rho - r_1\mathbb{1} - r_2 \ketbra{b})\ket{\psi} = 0 \quad (\forall \ket{\psi}\in\mathcal{H}),
   \end{align}
   which implies that $\rho = r_1 \mathbb{1} + r_2 \ketbra{b}$.
   Similarly, $\rho = r_1 \mathbb{1} + r_2 \ketbra{b}$ also implies $\sigma = r_1 \mathbb{1} + r_2 \ketbra{a}$.
   Thus, these cases are covered by (\ref{rs}).
   
   Then, we suppose that $\sigma \neq  r_1 \mathbb{1} + r_2 \ketbra{a}$ for any $r_1, r_2$, and hence $\rho \neq  r_1 \mathbb{1} + r_2 \ketbra{b}$.
   When the pair of states $(\rho,\sigma)$ satisfies equation \eqref{nd1},
   if we replace $\rho$ and $\sigma$ with
   $\gamma\mathbb{1} - \rho$ and $\gamma\mathbb{1} - \sigma$ respectively
   for an arbitrary real number $\gamma \neq 0$, equation \eqref{nd1} still holds.   
   Because of the assumption that $\sigma \neq  r_1 \mathbb{1} + r_2 \ketbra{a}$, we also have
   $\gamma\mathbb{1} - \rho \neq  r_1' \mathbb{1} + r_2' \ketbra{b}$ and $\gamma\mathbb{1} - \sigma \neq  r_1' \mathbb{1} + r_2' \ketbra{a}$ for any $\gamma$, $r_1'$, and $r_2'$.

   Regardless of the finiteness of the dimensionality of $\mathcal{H}$, we have the spectral decomposition
   \begin{align}
    \rho &= \sum_{i} p_i \ketbra{v_i}\\
    \sigma &= \sum_{i} q_i \ketbra{u_i},
   \end{align}
   where their spectra $\{p_i\}$ and $\{q_i\}$ are displayed in descending order as $p_1\geq p_2 \geq \cdots$ and $q_1\geq q_2 \geq \cdots$.
   We assume that $p_1 \geq q_1$ without loss of generality.
   Then, taking $\gamma = p_1$, we have
   \begin{align}
    (p_1 \mathbb{1} - \rho)\ket{v_1} = 0.\label{v1egn}
   \end{align}
   From the assumption, we have $p_1 \mathbb{1} - \sigma \neq r \ketbra{a}$ for any real number $r$. Thus, there exists $i_0$ with $p_1 - q_{i_0} \neq 0$ such that $\ket{a}$ and $\ket{u_{i_0}}$ are linearly independent.
   Hence, $\mathcal{V}_0:=\mathrm{span}\{\ket{a},\ket{u_{i_0}}\}$ has a CONS $\{\ket{a}, \ket{g_1}\}$.
    We take a unitary $U_1=\ket{v_1}\bra{a} + \sum_{j}\ket{h_j}\bra{g_j}$, where $\{\ket{g_j} |j\geq 2\}$ is a CONS of $\mathcal{V}_0^{\perp}$ for $D>2$, and $\{\ket{h_j}\}$ is an arbitrary CONS of $\{\ket{v_1}\}^{\perp}$,
   where $\mathcal{V}^{\perp}:=\{\ket{u}\in\mathcal{H}| \braket{u}{v} = 0 \; (\forall \ket{v}\in\mathcal{V})\}$.
   Since $U_1$ satisfies $U_1\ket{a}=\ket{v_1}$ by definition, equation (\ref{nd1}) reads
   \begin{align}
    \bra{b}U_1(p_1\mathbb{1} - \sigma)U_1^{\dagger}\ket{b}
    = \bra{v_1}(p_1\mathbb{1} - \rho)\ket{v_1} = 0
   \end{align}
   because of \eqref{v1egn}.
   Therefore, we have
   \begin{align}
    \sum_{i} (p_1 - q_i) |\bra{b}U_1\ket{u_i}|^2 = 0.
   \end{align}
   Thus, $\bra{b}U_1\ket{u_i}=0$ holds for any $i$ with $p_1 - q_i \neq 0$ because $p_1 - q_i \geq 0$ following from $p_1\geq q_1$.
   Hence, we have
   \begin{align}
    0 = \bra{b}U_1\ket{u_{i_0}}
    = \bra{b}\ket{h_1}\bra{g_1}\ket{u_{i_0}} + \bra{b}\ket{v_1}\bra{a}\ket{u_{i_0}}
   \end{align}
   since $\bra{g_j}\ket{u_{i_0}}=0$ $(j\geq 2)$ follows from the definition.
   Because $\ket{a}$ and $\ket{u_{i_0}}$ are linearly independent, we have $\bra{g_1}\ket{u_{i_0}}\neq 0$.
   Therefore, we obtain
   \begin{align}
    \bra{b}\ket{h_1}=-\frac{\bra{b}\ket{v_1}\bra{a}\ket{u_{i_0}}}{\bra{g_1}\ket{u_{i_0}}}.\label{l1-5}
   \end{align}
   Since the right-hand side of \eqref{l1-5} is independent of $\ket{h_1}$, and $\ket{h_1}$ is an arbitrary vector in $\{\ket{v_1}\}^{\perp}$,
   we have $\bra{b}\ket{h_1} = -\bra{b}\ket{v_1}\bra{a}\ket{u_{i_0}}/\bra{g_1}\ket{u_{i_0}} = -\bra{b}\ket{h_1}$ by replacing $\ket{h_1}$ with $-\ket{h_1}$, which reads
   $\bra{b}\ket{h_1}=0$ for any vector $\ket{h_1}$ in $\{\ket{v_1}\}^{\perp}$.
   Therefore, we obtain
   \begin{align}
    \ketbra{v_1} = \ketbra{b}\label{bv1}
   \end{align}
   and $\bra{a}\ket{u_{i_0}}=0$ since $\ket{b}\neq 0$.

      By taking a unitary operator $U_0$ such that $U_0\ket{a} = \ket{b}$, equation \eqref{nd1} reads
   \begin{align}
    \bra{b}\rho\ket{b} = \bra{a}\sigma\ket{a}.\label{bbaa}
   \end{align}
   Because $q_1 \geq \bra{a}\sigma\ket{a}$ by the definition of $q_1$, equations \eqref{bv1} and \eqref{bbaa} imply $q_1\geq p_1$.
   Therefore, we have $q_1 = p_1$ since we have assumed $p_1 \geq q_1$.
   Then, we also obtain $\ketbra{u_1} = \ketbra{a}$ similarly.

   Taking $\gamma = p_2$, we have
   \begin{align}
    (p_2\mathbb{1} - \rho)\ket{v_2} = 0.\label{v2egn}
   \end{align}
   From the assumption, we have $p_2 \mathbb{1} - \sigma \neq r \ketbra{a}$ for any real number $r$. Thus, there exists $i \geq 2$ with $p_2 - q_{i} \neq 0$ because $\ketbra{u_1} = \ketbra{a}$.
   We take a unitary $U_{i}:=\ketbra{v_2}{a} + \ketbra{b}{u_i} + \sum_{j\geq 3}| v_j\rangle \langle u_{\tilde{j}} | $, where
   \begin{align}
    \tilde{j} :=
    \left\{
\begin{array}{l}
 \displaystyle
  j \quad (j \neq i)\\
  2 \quad (j = i).
\end{array}
\right.
   \end{align}
   Since $U_i$ satisfies $U_i\ket{a}=\ket{v_2}$ by definition, equation (\ref{nd1}) reads
   \begin{align}
    \bra{b}U_i(p_2\mathbb{1} - \sigma)U_i^{\dagger}\ket{b}
    = \bra{v_2}(p_2\mathbb{1} - \rho)\ket{v_2} = 0\label{p2v2}
   \end{align}
   because of \eqref{v2egn}.
   However, we have
   \begin{align}
    \bra{b}U_i(p_2\mathbb{1} - \sigma)U_i^{\dagger}\ket{b}
    = \sum_{j} (p_2 - q_j) |\bra{b}U_i\ket{u_j}|^2
    = p_2 - q_i
   \end{align}
   from the definition of $U_i$ and \eqref{bv1}.
   Thus, combined with \eqref{p2v2}, we obtain $p_2 - q_i = 0$, which contradicts $p_2 - q_i \neq 0$.
   In conclusion, \eqref{rs} is the only possibility.
  \end{proof}


\begin{thebibliography}{99}
\bibitem{Baumer} E. B\"aumer, M. Lostaglio, M. Perarnau-Llobet, and R. Sampaio, {\it Fluctuating work in coherent quantum systems: Proposals and limitations}, arXiv:1805.10096v2 (2018).
\bibitem{CHT} M. Campisi, P. H\"anggi, and P. Talkner, {\it Colloquium: Quantum fluctuation relations: Foundations and applications}, Rev. Mod. Phys. {\bf 83}, 771 (2011).
\bibitem{Kurchan} J. Kurchan, {\it A quantum fluctuation theorem}, arXiv:cond-mat/0007360 (2000).
\bibitem{Tasaki} H. Tasaki, {\it Jarzynski relation for quantum systems and some applications}, arXiv:cond-mat/0009244 (2000).
\bibitem{TLH} P. Talkner, E. Lutz, and P. H\"anggi, {\it Fluctuation theorems: Work is not an observable}, Phys. Rev. E {\bf 75}, 050102 (2007).
\bibitem{Jarzynski} C. Jarzynski, {\it Nonequilibrium Equality for Free Energy Differences}, Phys. Rev. Lett. {\bf 78}, 2690  (1997).
\bibitem{TH2007} P. Talkner and P. H\"anggi, {\it The Tasaki-Crooks quantum fluctuation theorem}, J. Phys. A {\bf 40}, F569 (2007).
\bibitem{Crooks} G. E. Crooks, {\it Entropy production fluctuation theorem and the nonequilibrium work relation for free energy differences}, Phys. Rev. E {\bf 60}, 2721 (1999).
\bibitem{Messiah} A. Messiah, {\it Quantum Mechanics} (North Holland Publishing Company, Amsterdam 1961).
\bibitem{VWT} B. P. Venkatesh, G. Watanabe, and P. Talkner, {\it Transient quantum fluctuation theorems and generalized measurements}, New J. Phys. {\bf 16}, 015032 (2014).
\bibitem{Perarnau-Llobet} M. Perarnau-Llobet, E. B\"aumer, K. V. Hovhannisyan, M. Huber, and A. Acin, {\it No-go Theorem for the Charcterization of Work Fluctuations in Coherent Quantum Systems}, Phys. Rev. Lett. {\bf 118}, 070601 (2017).
\bibitem{Pusz} W. Pusz and S. L. Woronowicz, {\it Passive states and KMS states for general quantum systems}, Commun. Math. Phys. {\bf 58}, 273 (1978).
\bibitem{Alicki} R. Alicki, {\it The quantum open system as a model of the heat engine}, J. Phys. A {\bf 12}, 103 (1979).
 \bibitem{Kosloff} R. Kosloff and Y. Rezek {\it The quantum harmonic Otto cycle}, Entropy {\bf 19}, 136 (2017).
\bibitem{deChiara} G. De Chiara, A. J. Roncaglia , and J. P. Paz, {\it Measuring work and heat in ultracold quantum gases}, New J. Phys. {\bf 17}, 035004 (2015).
\bibitem{HayTaj} M. Hayashi and H. Tajima, {\it Measurement-based formulation of quantum heat engines}, Phys. Rev. A {\bf 95}, 032132 (2017).
\bibitem{TH2016} P. Talkner and P. H\"anggi, {\it Aspects of quantum work}, Phys. Rev. E {\bf 93}, 022131 (2016).
\bibitem{Cerisola} F. Cerisola, Y. Margalit, S. Machluf, A. J. Roncaglia, J. P. Paz, and R. Folman, {\it Using a quantum work meter to test non-equilibrium fluctuation theorems}, Nature Commun.  {\bf 8}, 1241 (2017).
\bibitem{Haag} R. Haag and D. Kastler, {\it An algebraic approach to quantum field theory}, J. Math. Phys. {\bf 5}, 848 (1964).
\bibitem{Kraus} K. Kraus, {\it States, Effects and Operations}, Lecture Notes in Physics {\bf 190} (Springer, Berlin 1983).
\bibitem{Schatten} R. Schatten, {\it A Theory of Cross-Spaces} (Princeton University Press, Princeton, 1950).
 \bibitem{DL} E. B. Davies, and J. T. Lewis, {\it An operational approach to quantum probability}, Comm. Math. Phys. {\bf 17}, 239 (1970).
 \bibitem{Ozawa} M. Ozawa, {\it Quantum measuring processes of continuous variables}, J. Math. Phys. {\bf 25}, 79 (1984).
 \bibitem{Hayashi} M. Hayashi, {\it Quantum Information Theory: Mathematical Foundation}, 2nd ed. (Springer, Berlin, 2017).
 %\bibitem{discrete}The restriction to purely discrete spectra of the considered Hamiltonian $H$ is justified for our purposes because otherwise the canonical density matrix $\rho_\beta = Z^{-1} e^{-\beta H}$ would not exist. As explained in Section~\ref{sec_work}, the measured energies belong either to the initial, or the final Hamiltonian. In the forward process a canonical state is assumed to exist with respect to the initial, in the backward process likewise with respect to the final Hamiltonian.
%%%%%%%%%%%%%%%%
\bibitem{Sagawa} T. Sagawa and M. Ueda, {\it Generalized Jarzynski Equality Under Nonequilibrium Feedback Control}, Phys. Rev. Lett. {\bf 104}, 090602 (2010).
\bibitem{Morikuni} Y. Morikuni and H. Tasaki, {\it Quantum Jarzynski-Sagawa-Ueda relation}, J. Stat. Phys. {\bf 143}, 1 (2011).
 \bibitem{WVTCH} G. Watanabe, B. P. Venkatesh, P. Talkner, M. Campisi, and P. H\"anggi, {\it Quantum fluctuation theorems and generalized measurements during the force protocol}, Phys. Rev. E {\bf 89}, 032114 (2014).
	 %%%%%%%%%%%%[27]
 \bibitem{Rastegin13} A. E. Rastegin, {\it Non-equilibrium equalities with unital quantum channels}, J. Stat. Mech.: Theor. Exp. {\bf 2013}, P06016 (2013).
 \bibitem{Smith} A. Smith, Y. Lu, S. An, X. Zhang, J.-N. Zhang, Z. Gong, H. T. Quan, C. Jarzynski, and K. Kim, {\it Verification of the quantum nonequilibrium work relation in the presence of decoherence}, New J. Phys. {\bf 20}, 013008 (2018).
	 \bibitem{Rastegin18} A.~E.~Rastegin, {\it On quantum fluctuations relations with generalized energy measurements}, Int. J. Theor. Phys. {\bf 57}, 1425 (2018).
\bibitem{TMYH} P. Talkner, M. Morillo, J. Yi, and P. H\"anggi, {\it Statistics of work and fluctuation theorems for
microcanonical initial states}, New J. Phys. {\bf 15}, 095001 (2013).
\bibitem{channel} M. A. Nielsen and I. L. Chuang, {\it Quantum Computation and Quantum Information} (Cambridge University Press, Cambridge, 2000).
\bibitem{DHS} N. Datta, A.S. Holevo, and Y. Suhov, {\it Additivity for transpose depolarizing channels}, Int. J. Quant. Inf. {\bf 04}, 85 (2006).
\bibitem{Gardas} B.~Gardas and S.~Deffner, {\it Quantum fluctuation theorem for error diagnostics in quantum annealers}, Sci. Rep. {\bf 8}, 17191 (2018).
\bibitem{WVT} G. Watanabe, B. P. Venkatesh, and P. Talkner, {\it Generalized energy measurements and modified transient quantum fluctuation theorems}, Phys. Rev. E {\bf 89}, 052116 (2014).
\bibitem{Yosida} K. Yosida, {\it Functional Analysis}, 6th ed. (Springer Verlag, Berlin, 1980).

\end{thebibliography}
\end{document}